\documentclass[envcountsame]{llncs}

\usepackage{amsfonts,epsfig}
\usepackage{enumerate}
\usepackage{comment}

\begin{document}

\title{Steiner Forest Orientation Problems \thanks{A preliminary version of this article was presented at the 20th European Symposium on Algorithms (ESA 2012).}}

\author{Marek Cygan\inst{1}
 \thanks{Partially supported by National Science Centre grant no. N206 567140, Foundation for Polish Science, ERC Starting Grant NEWNET 279352, NSF CAREER award 1053605, DARPA/AFRL award FA8650-11-1-7162 and ONR YIP grant no. N000141110662.}
\and 
Guy Kortsarz\inst{2} \thanks{Partially supported by NSF support grant award number 0829959.}
\and 
Zeev Nutov\inst{3}
        }

\institute{
  IDSIA, University of Lugano, Switzerland \ \email{marek@idsia.ch}
\and
Rutgers University, Camden \                             \email{guyk@camden.rutgers.edu}
\and
The Open University of Israel \                          \email{nutov@openu.ac.il}
           }

\maketitle

\newcommand {\ignore} [1] {}

\newcommand{\cur}{{\rm cur}}

\begin{abstract}
We consider connectivity problems with orientation constra\-ints.
Given a directed graph $D$ and a collection of ordered node pairs $P$
let $P[D]=\{(u,v) \in P: D \mbox{ contains a } uv\mbox{-path}\}$.
In the {\sf Steiner Forest Orientation} problem we are given  
an undirected graph $G=(V,E)$ with edge-costs and a set $P \subseteq V \times V$ of ordered node pairs. 
The goal is to find a minimum-cost subgraph $H$ of $G$ and an orientation $D$ of $H$ 
such that $P[D]=P$. We give a $4$-approximation algorithm for this problem.

In the {\sf Maximum Pairs Orientation} problem we are given a  
graph $G$ and a multi-collection of ordered node pairs $P$ on $V$.
The goal is to find an orientation $D$ of $G$ such that $|P[D]|$ is maximum.
Generalizing the result of Arkin and Hassin~[DAM'02] for $|P|=2$, we will show that 
for a mixed graph $G$ (that may have both directed and undirected edges),
one can decide in $n^{O(|P|)}$ time whether $G$ has an orientation $D$ with $P[D]=P$
(for undirected graphs this problem admits a polynomial time algorithm for any $P$, but it is NP-complete on mixed graphs).
For undirected graphs, we will show that one can decide 
whether $G$ admits an orientation $D$ with $|P[D]| \geq k$ 
in $O(n+m)+2^{O(k\cdot \log \log k)}$ time; hence this decision problem is 
fixed-parameter tractable, which answers an open question from Dorn et al.~[AMB'11].
We also show that {\sf Maximum Pairs Orientation} admits ratio $O(\log |P|/\log\log |P|)$,
which is better than the ratio $O(\log n/\log\log n)$ of Gamzu et al.~[WABI'10] when $|P|<n$.

Finally, we show that the following node-connectivity problem can be solved in polynomial time:
given a graph $G=(V,E)$ with edge-costs, $s,t \in V$, and an integer $\ell$, 
find a min-cost subgraph $H$ of $G$ with an orientation $D$ 
such that $D$ contains $\ell$ internally-disjoint $st$-paths, 
and $\ell$ internally-disjoint $ts$-paths.
\end{abstract}

% key-words: Network design, Steiner forest, Graph orientation, Graph-connectivity, 
% Fixed-parameter tractability, Approximation algorithms

\section{Introduction}

\subsection{Problems considered and our results}

We consider connectivity problems with orientation constraints.
Unless stated otherwise, graphs are assumed to be undirected (and may not be simple), 
but we also consider directed graphs,
and even {\em mixed graphs}, which may have both directed and undirected edges.
Given a mixed graph $H$, an {\em orientation of $H$} is a directed graph $D$ 
obtained from $H$ by assigning to each undirected edge one of the two possible directions.
For a mixed graph $H$ on node set $V$ and a multi-collection of 
ordered node pairs (that is convenient to consider as a set of directed edges) $P$ on $V$ 
let $P[H]$ denote the subset of the pairs (or edges) in $P$
for which $H$ contains a $uv$-path. We say that $H$ {\em satisfies} $P$ if $P[H]=P$,
and that $H$ is {\em $P$-orientable} if $H$ admits an orientation $D$ that satisfies $P$.
We note that for undirected graphs 
it is easy to check in polynomial time whether $H$ is $P$-orientable, cf. \cite{HM}
and Section~\ref{s:mix} in this paper.
Let $n=|V|$ denote the number of nodes in $H$ and $m=|E(H)|+|P|$ 
the total number of edges and arcs in $H$ and ordered pairs in $P$.

Our first problem is the classic {\sf Steiner Forest} problem with orientation constraints.

\vspace{0.1cm}

\begin{center} 
\fbox{
\begin{minipage}{0.96\textwidth}
{\sf Steiner Forest Orientation} \\
{\em Instance:} 
A graph $G=(V,E)$ with edge-costs and a set $P \subseteq V \times V$ of 
\hphantom{\em Instance:} \  
ordered node pairs. \\
{\em Objective:}
Find a minimum-cost subgraph $H$ of $G$ with an orientation $D$ that 
\hphantom{\em Objective:} satisfies $P$.
\end{minipage}
}
\end{center}

\begin{theorem} \label{t:min}
{\sf Steiner Forest Orientation} admits a $4$-approximation algorithm. 
\end{theorem}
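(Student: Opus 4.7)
\medskip

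\noindent\textbf{Proof plan.} Let $\bar P = \{\{u,v\} : (u,v) \in P\}$ denote the undirected version of the demand set. The starting point is the natural lower bound
\[
\mathrm{OPT}_{\mathrm{SF}}(\bar P) \;\leq\; \mathrm{OPT},
\]
which holds because the underlying undirected graph of any orientation $D$ satisfying $P$ connects $u$ and $v$ whenever $(u,v) \in P$, and is therefore feasible for the ordinary (undirected) Steiner Forest problem with demands $\bar P$. So I would aim for a solution of cost at most $4 \cdot \mathrm{OPT}_{\mathrm{SF}}(\bar P)$, and hence at most $4 \cdot \mathrm{OPT}$.

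\medskip

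\noindent The algorithm would proceed in two phases. In Phase~1, run the Agrawal--Klein--Ravi / Goemans--Williamson $2$-approximation for (undirected) Steiner Forest on $\bar P$ to obtain a forest $F \subseteq G$ with $c(F) \leq 2\cdot\mathrm{OPT}_{\mathrm{SF}}(\bar P) \leq 2\cdot\mathrm{OPT}$. In Phase~2, augment $F$ with a set of additional edges $A \subseteq E(G)\setminus E(F)$ so that $F \cup A$ is $P$-orientable, and output such an orientation (which exists in polynomial time, by the $P$-orientability test mentioned in the introduction / Section~\ref{s:mix}). The output's cost is $c(F) + c(A)$, so the whole job is to control $c(A) \leq 2\cdot\mathrm{OPT}$.

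\medskip

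\noindent For Phase~2 I would work tree-by-tree: in each tree $T$ of $F$, let $P_T = \{(u,v) \in P : u,v \in V(T)\}$ be the pairs that Phase~1 decided to route through $T$. The forest $F$ is generally \emph{not} $P$-orientable, because two demands $(u_1,v_1), (u_2,v_2) \in P_T$ may insist on opposite orientations of a common tree edge; the role of $A_T \subseteq E(G) \setminus E(T)$ is to provide alternative routes that resolve every such conflict, making $T \cup A_T$ $P_T$-orientable. The cost of $A_T$ is bounded by comparing against the optimum Steiner Forest Orientation $\mathrm{OPT}_T$ restricted to the component containing $T$: we have $\sum_T \mathrm{OPT}_T \leq \mathrm{OPT}$, and I would hope to show that $A_T$ can be computed with $c(A_T) \leq 2\cdot\mathrm{OPT}_T$ by formulating the per-tree augmentation task as an instance of a Steiner-Forest-like covering problem and invoking a primal--dual $2$-approximation once more. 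Summing over trees, $c(A) = \sum_T c(A_T) \leq 2\cdot\mathrm{OPT}$, giving total cost $\leq 4\cdot\mathrm{OPT}$.

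\medskip

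\noindent The main obstacle is Phase~2: identifying a clean combinatorial characterization of the augmentation needed so that (i) $T \cup A_T$ is guaranteed to be $P_T$-orientable after the augmentation, and (ii) computing $A_T$ can be reduced to a problem admitting a tight $2$-approximation. Simply minimizing $c(A_T)$ directly seems to bring back the full difficulty of Steiner Forest Orientation, so the key insight must be to replace the orientation requirement by a carefully chosen $2$-edge-connectivity / cut-covering condition on the tree edges of $T$ that are ``conflicted'' by pairs in $P_T$, which then falls within the reach of the Goemans--Williamson framework.
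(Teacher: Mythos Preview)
Your two-phase architecture is exactly the paper's, and your guess about Phase~2 is correct in spirit: the paper replaces $P$-orientability by a cut-covering condition and then shows the residual covering problem is uncrossable. Concretely, define $f_r(X)=2$ if there exist $(u,v),(u',v')\in P$ with $u,v'\in X$ and $v,u'\in\bar X$ (i.e., demands cross $X$ in both directions), $f_r(X)=1$ if demands cross in only one direction, and $f_r(X)=0$ otherwise. The key structural lemma is that for $0/1$ demands, an undirected graph $H$ is $P$-orientable \emph{iff} $d_H(X)\ge f_r(X)$ for every $X$; sufficiency follows by orienting each $2$-edge-connected component strongly and then orienting each bridge in the (now unique, since $f_r\le d_H$ rules out two-way conflicts across a bridge) forced direction. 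After Phase~1 the forest $J$ already has $d_J(X)\ge 1$ whenever $f_r(X)\ge 1$, so the deficient family is $\mathcal F=\{X:d_J(X)=1 \text{ and } d_P(X),d_P(\bar X)\ge 1\}$, i.e., exactly the ``conflicted bridges'' you anticipated. The paper proves $\mathcal F$ is uncrossable (this is the nontrivial part; one uses that members of $\mathcal F$ are unions of one side of a bridge of $J$ with whole components of $J$), so the GGPS primal--dual $2$-approximation applies. Since the optimum $H^\star$ satisfies $d_{H^\star}(X)\ge f_r(X)=2$ for every $X\in\mathcal F$ while $d_J(X)=1$, the edge set $E(H^\star)\setminus E(J)$ covers $\mathcal F$; hence Phase~2 costs at most $2\cdot c(H^\star)=2\cdot\mathrm{OPT}$, and the total is $\le 4\cdot\mathrm{OPT}$.

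Two remarks on your write-up. First, your proposal is a plan, not a proof: you explicitly leave the Phase~2 characterization and the uncrossability as ``the main obstacle,'' and those are precisely the two lemmas that carry all the weight. Second, your tree-by-tree accounting via $\sum_T \mathrm{OPT}_T\le\mathrm{OPT}$ is shaky as stated: distinct trees of the Phase~1 forest $F$ can lie in the same connected component of $G$, and then the optimal solution's edges serving $P_{T_1}$ and $P_{T_2}$ need not be disjoint, so the sum can overcount. The paper sidesteps this by working with the global family $\mathcal F$ and bounding Phase~2 directly against $E(H^\star)\setminus E(J)$; you should do the same rather than decomposing per tree.
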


Our next bunch of results deals with maximization problems of finding an orientation that 
satisfies the maximum number of pairs in $P$. 

\vspace{0.1cm}

\begin{center} 
\fbox{
\begin{minipage}{0.96\textwidth}
{\sf Maximum Pairs Orientation} \\
{\em Instance:} \ 
A graph $G$ and a multi-collection of ordered node pairs (i.e., a set 
\hphantom{\em Instance:} \ of directed edges) $P$ on~$V$. \\
{\em Objective:}
Find an orientation $D$ of $G$ such that the number $|P[D]|$ of pairs 
\hphantom{\em Objective:} satisfied by $D$ is maximum.
\end{minipage}
}
\end{center}

\vspace{0.1cm}

Let {\sf $k$ Pairs Orientation} be the decision problem of determining whether 
{\sf Ma\-xi\-mum Pairs Orientation} has a solution of value at least $k$.
Let {\sf $P$-Orientation} be the decision problem of determining whether 
$G$ is $P$-orientable (this is the {\sf $k$ Pairs Orientation} with $k=|P|$).
As was mentioned, for undirected graphs {\sf $P$-Orientation} can be easily decided 
in polynomial time \cite{HM}.
Arkin and Hassin~\cite{H1} proved that on mixed graphs, {\sf $P$-Orientation}
is NP-complete, but it is polynomial-time solvable for $|P|=2$.
Using new techniques, we widely generalize the result of \cite{H1} as follows.

\begin{theorem} \label{t:mix}
Given a mixed graph $G$ and $P \subseteq V \times V$ one can decide in $n^{O(|P|)}$ time
whether $G$ is $P$-orientable;
namely, {\sf $P$-Orientation} with a mixed graph $G$ can be decided in $n^{O(|P|)}$ time.
In particular, the problem can be decided in polynomial time for any instance with constant $|P|$.
\end{theorem}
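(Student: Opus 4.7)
My plan is a guess-and-verify scheme: enumerate, in $n^{O(|P|)}$ time, compact combinatorial skeletons of candidate $P$-satisfying orientations of $G$, and check each skeleton for realizability in polynomial time per guess. The total running time is then $n^{O(|P|)}$.

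\emph{Structural lemma.} Suppose $D$ is an orientation of $G$ satisfying $P$, and for each $(u_i,v_i) \in P$ fix a simple $u_iv_i$-path $\pi_i$ in $D$, chosen jointly to minimize the total number of edges in $U = \bigcup_i \pi_i$. The core technical step is to extract a set $S \subseteq V(U)$ of size $O(|P|)$ --- the $2|P|$ terminals together with a bounded number of branch vertices of $U$ --- such that each $\pi_i$ is the concatenation of directed \emph{segments} whose internal vertices lie in $V \setminus S$ and which have pairwise disjoint interior vertex sets. The total number of segments is $O(|P|)$. Minimality of the witness system is the key: any excess intersection between two of the $\pi_i$ could be eliminated by a path exchange, contradicting minimality.

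\emph{Enumeration and verification.} A skeleton is a pair $(S,\sigma)$, where $S$ is an $O(|P|)$-sized subset of $V$ and $\sigma$ labels each of the $O(|P|)$ segments with an ordered endpoint pair in $S \times S$ and with its position along the corresponding $\pi_i$; altogether there are $n^{O(|P|)}$ skeletons. For each skeleton I would decide whether $G$ admits internally vertex-disjoint mixed-graph directed paths realizing the guessed segments with interiors in $V \setminus S$. Because such paths share no undirected edges, per-segment orientations cannot conflict and global orientation consistency reduces to per-segment existence. The \emph{main obstacle} is that $k$-vertex-disjoint paths in directed graphs is NP-hard already for $k=2$, ruling out a generic disjoint-paths solver; I would sidestep this by enriching the guess with $O(|P|)$ additional certificate vertices --- e.g.\ a small separator per segment certifying its disjointness from the others --- thereby decoupling verification into polynomially many independent mixed-graph reachability queries, all within the $n^{O(|P|)}$ budget.
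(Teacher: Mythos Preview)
Your approach has two genuine gaps.

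\textbf{The structural lemma is false.} The claim that a minimal witness system $\{\pi_i\}$ has only $O(|P|)$ branch vertices rests on a path-exchange argument that breaks when the orientation $D$ contains directed cycles. Take $|P|=2$: vertices $v_1,\dots,v_k,w_1,\dots,w_{k-1}$ with directed arcs $v_i\to v_{i+1}$, $v_{i+1}\to w_i$, $w_i\to v_i$, and $P=\{(v_1,v_k),(v_k,v_1)\}$. The unique simple $v_1v_k$-path is $v_1\to\cdots\to v_k$; the unique simple $v_kv_1$-path is $v_k\to w_{k-1}\to v_{k-1}\to\cdots\to w_1\to v_1$. The witness system is forced, hence minimal, yet every $v_i$ with $1<i<k$ has degree $4$ in $U$, giving $k-2$ branch vertices. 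No $S$ of size $O(1)$ cuts both paths into segments with pairwise disjoint interiors, since any nontrivial segment of $\pi_1$ has some interior $v_i$ that also lies on $\pi_2$. Your exchange argument cannot apply because the two paths traverse their common vertices in opposite orders.

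\textbf{The verification step is not an algorithm.} Even granting a skeleton, deciding whether $G$ contains internally-vertex-disjoint directed paths between prescribed pairs in $V\setminus S$ is the directed disjoint-paths problem, NP-hard already for two pairs. ``Guess $O(|P|)$ extra certificate vertices, e.g.\ a small separator per segment'' is not a method: you give no reason such separators of total size $O(|P|)$ exist, and no mechanism by which guessing them reduces a global disjointness constraint to independent reachability queries.

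The paper avoids both issues by a preprocessing step you omit: it contracts ori-cycles until the directed graph $G'$ obtained by shrinking each undirected connected component is a DAG. It then runs a memoized branching procedure over a topological order of $G'$, whose state records the current outgoing arc of each of the $|P|$ paths---an $n^{O(|P|)}$ state space---and at each component checks orientability via the easy undirected case. The acyclicity of $G'$ is precisely what makes this DP sound and is also what would be needed to rescue your structural lemma.
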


In several papers, for example in \cite{ZV}, 
it is stated that any instance of {\sf Ma\-xi\-mum Pairs Orientation} 
admits a solution $D$ such that $|P[D]| \geq |P|/(4\log n)$. 
Furthermore Gamzu et al.~\cite{segev-approx} show that
{\sf Maximum Pairs Orientation} admits an $O(\log n/\log\log n)$-approximation algorithm.
In \cite{fpt} it is shown that {\sf $k$ Pairs Orientation} is fixed-parameter 
tractable\footnote{``Fixed-parameter tractable'' means the following. 
In the parameterized complexity setting, an instance of a decision problem 
comes with an integer parameter $k$.
A problem is said to be {\em fixed-parameter tractable} (w.r.t. $k$) if there exists an algorithm 
that decides any instance $(I,k)$ in time $f(k) {\rm poly}(|I|)$ for some (usually exponential) 
computable function $f$.}
when parameterized by the maximum number of pairs that can be connected via one node.
They posed an open question if the problem is fixed-parameter tractable when parameterized by $k$
(the number of pairs that should be connected),
namely, whether {\sf $k$ Pairs Orientation} can be decided in $f(k) {\rm poly}(n)$ time, 
for some computable function $f$.
Our next result answers this open question, and for $|P|<n$ improves the approximation 
ratio $O(\log n/\log\log n)$ for {\sf Maximum Pairs Orientation} of \cite{ZV,segev-approx}. 

\begin{theorem} \label{t:kernel}
Any instance of {\sf Maximum Pairs Orientation} 
admits a solution $D$, that can be computed in polynomial time, such that $|P[D]| \geq |P|/(4\log_2 (3|P|))$.
Furthermore
\begin{itemize}
\item[{\em (i)}]
{\sf $k$ Pairs Orientation} can be decided in $O(n+m)+2^{O(k\cdot \log \log k)}$ time;
thus it is fixed-parameter tractable when parameterized by $k$.
\item[{\em (ii)}]
{\sf Maximum Pairs Orientation} admits an $O(\log |P|/\log\log |P|)$-approximation algorithm.
\end{itemize}
\end{theorem}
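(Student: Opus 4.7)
The plan is to first establish the main quantitative claim that every instance admits, in polynomial time, an orientation $D$ with $|P[D]| \geq |P|/(4\log_2(3|P|))$, and then derive (i) and (ii) as consequences. The starting point is the $|P|/(4\log n)$ bound already stated in~\cite{ZV}; to turn $\log n$ into $\log |P|$ I would kernelize $G$ so that the vertex count becomes linear in the number of pairs. Concretely, I decompose $G$ into its $2$-edge-connected components linked by bridges: within each such component Robbins' theorem provides a strongly connected orientation, so every pair with both endpoints in one component is automatically satisfied and the component interacts with the rest of $G$ only through its cut vertices. I would replace each component by a gadget (e.g., a clique) on its cut vertices together with the pair endpoints it contains, prune subtrees of the resulting bridge-forest that contain no pair endpoint, and suppress degree-$2$ non-endpoint nodes. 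A standard tree counting argument then bounds the reduced vertex count by at most $3|P|$, producing an equivalent kernel $(G',P)$. Applying the $|P|/(4\log n)$ bound of~\cite{ZV} to $(G',P)$ yields $|P[D]| \geq |P|/(4\log_2(3|P|))$ in polynomial time.

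For (i), the FPT algorithm runs the kernelization in $O(n+m)$ time and then proceeds as follows. If $|P|/(4\log_2(3|P|)) \geq k$, output YES using the main bound. Otherwise $|P| < 4k\log_2(3|P|)$, which forces $|P| = O(k\log k)$. Enumerate every $k$-element subset $P' \subseteq P$ and test $P'$-orientability of $G$ in polynomial time using the polynomial undirected decision procedure referenced in~\cite{HM}. The number of subsets is $\binom{|P|}{k} \leq (e|P|/k)^k = 2^{O(k\log\log k)}$, giving the claimed running time $O(n+m)+2^{O(k\log\log k)}$.

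For (ii), kernelize as above and then invoke the $O(\log n/\log\log n)$-approximation of Gamzu et al.~\cite{segev-approx} on $(G',P)$; since $|V(G')| = O(|P|)$, the resulting ratio is $O(\log|P|/\log\log|P|)$.

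The main technical obstacle will be showing that the kernelization is lossless, i.e. that $(G,P)$ and $(G',P)$ admit the same maximum value of $|P[D]|$. The key sub-claim to verify is that some optimal orientation of $G$ can be taken to be strongly connected on every $2$-edge-connected component, so that replacing each component by a strongly connected gadget on its external vertices does not change which pair sets are simultaneously realizable. Once this is in place the problem collapses to orienting the bridges of the contracted forest, a tree-orientation problem on $O(|P|)$ edges, and the logarithmic analyses of~\cite{ZV,segev-approx} both apply with $n$ replaced by $O(|P|)$.
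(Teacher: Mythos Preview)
Your proposal follows essentially the same strategy as the paper: kernelize to $O(|P|)$ vertices, invoke the $|P|/(4\log n)$ bound of~\cite{ZV} on the kernel, then derive (i) via the threshold-plus-enumeration argument and (ii) by running~\cite{segev-approx} on the kernel. Two small points are worth noting. First, the paper simply \emph{contracts} each $2$-edge-connected component to a single node (this is Lemma~\ref{lem:contract-cycle}), so the kernel is literally a tree; your clique gadgets are unnecessary and make the vertex count harder to control. Second, to reach exactly $3|P|-1$ vertices (and hence the explicit constant $4\log_2(3|P|)$ in the first sentence of the theorem), the paper performs one further reduction you omit: it contracts every tree edge used by at most one pair, which forces each leaf to lie in at least two pairs and yields $\ell\le |P|$ in addition to $\ell+t\le 2|P|$. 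Your stated reductions (prune endpoint-free subtrees, suppress degree-$2$ non-endpoints) only give $\ell+t\le 2|P|$, hence about $4|P|$ vertices; that is enough for the asymptotic statements in (i) and (ii) but not for the exact constant claimed in the opening sentence.
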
 

Note that $|P|$ may be much smaller than $n$, say $|P|=2^{\sqrt{\log n}}$.
While this size of $P$ does not allow exhaustive search in time 
polynomial in $n$, we do get an approximation ratio of 
$O(\sqrt{\log n}/\log\log n)$, 
which is better than the ratio $O(\log n/\log\log n)$ of Gamzu et al.~\cite{segev-approx}.

One may also consider ``high-connectivity'' orientation problems,
to satisfy prescribed connectivity demands.
Several papers considered min-cost edge-connectivity orientation problems, cf. \cite{KNS}.
Almost nothing is known about min-cost node-connectivity orientation problems.
We consider the following simple but still nontrivial variant.

\vspace{0.1cm}

\begin{center} 
\fbox{
\begin{minipage}{0.96\textwidth}
{\sf $\ell$ Disjoint Paths Orientation} \\
{\em Instance:} \  
A graph $G=(V,E)$ with edge-costs, $s,t \in V$, and an integer $\ell$. \\
{\em Objective:}
Find a minimum-cost subgraph $H$ of $G$ with an orientation~$D$ 
\hphantom{\em Objective:} such that $D$ contains $\ell$ internally-disjoint $st$-paths, 
and $\ell$ internally-\hphantom{\em Objective:} disjoint  $ts$-paths.
\end{minipage}
}
\end{center}

\vspace{0.1cm}

Checking whether {\sf $\ell$ Disjoint Paths Orientation} admits a feasible solution can 
be done in polynomial time using the characterization of feasible solutions of 
Egawa, Kaneko, and Matsumoto \cite{EKM} (see Theorem~\ref{t:EKM} in Section~\ref{s:nc});
we use this characterization to prove the following.

\begin{theorem} \label{t:nc}
{\sf $\ell$ Disjoint Paths Orientation} can be solved in polynomial time.
\end{theorem}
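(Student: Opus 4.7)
\medskip

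\noindent\textbf{Proof plan.} I would separate feasibility from optimization. First, invoke the Egawa--Kaneko--Matsumoto characterization (Theorem~\ref{t:EKM}), which asserts that an undirected graph $H$ with terminals $s,t$ admits an orientation containing $\ell$ internally-disjoint $st$-paths and $\ell$ internally-disjoint $ts$-paths if and only if a polynomial-time checkable cut condition on the pair $(s,t)$ in $H$ holds (morally, that $H$ has $2\ell$ internally-disjoint $st$-paths, possibly augmented with a mild edge-connectivity condition in degenerate cases). The proof of EKM is constructive, so once a feasible $H$ has been identified the orientation $D$ itself can be produced in polynomial time. It therefore suffices to compute a minimum-cost subgraph $H\subseteq G$ satisfying the EKM condition.

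I would formulate this optimization as a single-commodity min-cost flow problem in an auxiliary directed graph $G'$. Apply the standard node-splitting transformation: each internal vertex $v\in V\setminus\{s,t\}$ becomes a pair $v^-,v^+$ joined by a unit-capacity zero-cost arc, which enforces internal-vertex disjointness. For each undirected edge $uv\in E(G)$ of cost $c_{uv}$ introduce a direction-agnostic gadget: a split node $m_{uv}^-,m_{uv}^+$ joined by a single costly unit-capacity arc of cost $c_{uv}$, together with zero-cost unit-capacity feeder arcs $(u^+,m_{uv}^-),(v^+,m_{uv}^-)$ and $(m_{uv}^+,u^-),(m_{uv}^+,v^-)$. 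This gadget can route one unit of flow through $uv$ in either orientation while charging $c_{uv}$ exactly once, and forbids both orientations being used simultaneously. Then compute a minimum-cost integral flow of value $2\ell$ from $s$ to $t$ in $G'$ (polynomial time by standard algorithms); the set of undirected edges whose gadgets carry flow is the edge set of an optimal $H$. Finally run EKM on $H$ to obtain $D$.

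The main obstacle is establishing a two-way cost-preserving correspondence between integral $2\ell$-flows in $G'$ and subgraphs of $G$ meeting the EKM condition. The easy direction is flow decomposition: an integral $2\ell$-flow decomposes into $2\ell$ unit-capacity $s$-to-$t$ paths, which are internally disjoint at original vertices (by the node-split arc capacities) and traverse each undirected edge at most once (by the costly-arc capacity in each gadget), so they project to $2\ell$ internally-disjoint $st$-paths in the underlying $H$. The converse uses Menger's theorem together with the $2\ell$-connectivity guaranteed by EKM to lift $2\ell$ internally-disjoint $st$-paths in an optimal $H$ to a flow of the same cost in $G'$.

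A secondary subtlety is that the precise EKM condition may combine a node-cut requirement with an edge-cut or parity side-condition (for instance the case $\ell=1$, where ``internally disjoint'' is vacuous and the condition degenerates to $\lambda_H(s,t)\ge 2$). The unit capacities on gadget arcs automatically bound edge-usage as well as node-usage, so the flow model is flexible enough to encode such side conditions; still, the translation of EKM's exact characterization into the network must be carried out with care, and that is where I would spend the most effort in a full write-up.
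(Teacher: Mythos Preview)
Your high-level plan---invoke the Egawa--Kaneko--Matsumoto characterization, then cast the minimum-cost feasibility problem as a single-commodity min-cost flow via node splitting---is exactly the paper's approach. The gap is in your guess at what EKM actually says, and that guess propagates into a wrong capacity on the node-split arcs.

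The EKM condition (Theorem~\ref{t:EKM}) is $\lambda_{H\setminus C}(s,t)\ge 2(\ell-|C|)$ for all $C\subseteq V\setminus\{s,t\}$ with $|C|<\ell$. Via the mixed node/edge-capacitated Menger theorem (Lemma~\ref{l:M}), this is equivalent not to $2\ell$ \emph{internally-disjoint} $st$-paths, but to $2\ell$ \emph{edge-disjoint} $st$-paths such that every internal vertex lies on at most \emph{two} of them (Corollary~\ref{c:EKM}). Hence in the node-splitting reduction the arc $v^-v^+$ must carry capacity~$2$, not~$1$; with that single change your flow formulation is correct and coincides with the paper's.

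Your unit-capacity version is genuinely too strong, not merely inelegant. Take $\ell$ vertices $v_1,\dots,v_\ell$ with two parallel $sv_i$-edges and two parallel $v_it$-edges each. This graph satisfies EKM (orient one $s\!\to\! v_i\!\to\! t$ and one $t\!\to\! v_i\!\to\! s$ through each $v_i$), yet $\kappa(s,t)=\ell$, so no $2\ell$-flow exists in your network and your algorithm would wrongly declare the instance infeasible. So the ``converse'' direction of your cost-preserving correspondence fails: EKM does not supply $2\ell$ internally-disjoint $st$-paths to lift. The fix is precisely the capacity-$2$ node arcs; the edge gadget you describe (or, equivalently, the bidirection trick the paper uses) handles the undirected edges correctly.
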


Theorems \ref{t:min}, \ref{t:mix}, \ref{t:kernel} and \ref{t:nc}, 
are proved in
sections \ref{s:min}, \ref{s:mix}, \ref{s:kernel} and \ref{s:nc}, 
respectively.

\subsection{Previous and related work}

Let $\lambda_H(u,v)$ denote the $(u,v)$-edge-connectivity in a graph $H$,
namely, the ma\-xi\-mum number of pairwise edge-disjoint $uv$-paths in $H$.
Similarly, let $\kappa_H(u,v)$ denote the $(u,v)$-node-connectivity in $H$, 
namely, the maximum number of pairwise internally node-disjoint $uv$-paths in $H$.
Given an edge-connectivity demand function $r=\{r(u,v):(u,v) \in V \times V\}$, we say that 
$H$ {\em satisfies} $r$ if $\lambda_H(u,v) \geq	r(u,v)$ for all $(u,v) \in V \times V$; similarly, 
for node connectivity demands, we say that $H$ {\em satisfies} $r$ 
if $\kappa_H(u,v) \geq	r(u,v)$ for all $(u,v) \in V \times V$.

\vspace{0.1cm}

\begin{center} 
\fbox{
\begin{minipage}{0.96\textwidth}
{\sf Survivable Network Orientation} \\
{\em Instance:} \ 
A graph $G=(V,E)$ with edge-costs and edge/node-connectivity 
\hphantom{\em Instance:} \ demand function $r=\{r(u,v):(u,v) \in V \times V\}$. \\
{\em Objective:}
Find a minimum-cost subgraph $H$ of $G$ with orientation $D$ that 
\hphantom{\em Objective:} satisfies $r$.
\end{minipage}
}
\end{center}

\vspace{0.1cm}

So far we assumed that the orienting costs are symmetric; 
this means that orienting an undirected edge connecting $u$ and $v$ 
in each one of the two directions is the same, namely, that $c(u,v)=c(v,u)$.
This assumption is reasonable in practical problems,  
but in a theoretic more general setting, we might have non-symmetric costs $c(u,v) \neq c(v,u)$.
Note that the version with non-symmetric costs includes the min-cost version 
of the corresponding directed connectivity problem, and also the case when the input graph 
$G$ is a mixed graph, by assigning large/infinite costs to non-feasible orientations.
For example, {\sf Steiner Forest Orientation} with non-symmetric costs 
includes the {\sf Directed Steiner Forest} problem,
which is {\sf Label-Cover} hard to approximate \cite{DK}.
This is another reason to consider the symmetric costs version.

Khanna, Naor, and Shepherd \cite{KNS} considered several orientation problems with non-symmetric costs.
They showed that when $D$ is required to be $k$-edge-outconnected from a given roots $s$
(namely, $D$ contains $k$ edge-disjoint paths from $s$ to every other node),
then the problem admits a polynomial time algorithm.
In fact they considered a more general problem of finding an orientation that covers 
an intersecting supermodular or crossing supermodular set-function.
See \cite{KNS} for precise definitions. Further generalization of this result due to 
Frank, T.~Kir\'{a}ly, and Z.~Kir\'{a}ly was presented in \cite{FKK}.
For the case when $D$ should be strongly connected, \cite{KNS} obtained 
a $4$-approximation algorithm; note that our {\sf Steiner Forest Orientation} problem 
has much more general demands, that are {\em not} captured by  
intersecting supermodular or crossing supermodular set-functions, but we 
consider symmetric edge-costs (otherwise the problem includes the {\sf Directed Steiner Forest} problem). 
For the case when $D$ is required to be 
$k$-edge-connected, $k \geq 2$, \cite{KNS} obtained a pseudo-approximation algorithm 
that computes a $(k-1)$-edge-connected subgraph of cost at most $2k$ times the cost 
of an optimal $k$-connected subgraph.

We refer the reader to \cite{FK} for a survey on characterization of graphs that admit orientations satisfying 
prescribed connectivity demands, and here mention only 
the following central theorem,
that can be used to obtain a pseudo-approximation for edge-connectivity orientation problems.

\begin{theorem} [Well-Balanced Orientation Theorem, Nash-Williams \cite{NW}] \label{t:NW} \ 
Any undirected graph $H=(V,E_H)$ has an orientation $D$ for which
$\lambda_D(u,v) \geq \left\lfloor \frac{1}{2} \lambda_H(u,v) \right\rfloor$
for all $(u,v) \in V \times V$. 
\end{theorem}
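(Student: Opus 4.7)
My plan is to first handle the case where $H$ is Eulerian and then reduce the general case to it. If every vertex of $H$ has even degree, I orient each edge along the direction of traversal of an Eulerian circuit. The resulting $D$ has $d_D^+(v) = d_D^-(v)$ at every vertex, so for every $S \subseteq V$ the contributions of the edges internal to $S$ cancel, yielding $d_D^+(S) = d_D^-(S) = d_H(S)/2$. Combining this with Menger's theorem gives
$$\lambda_D(u,v) = \min_{u \in S,\ v \notin S} d_D^+(S) = \tfrac{1}{2}\min_{u \in S,\ v \notin S} d_H(S) = \tfrac{1}{2}\lambda_H(u,v),$$
and this equals $\lfloor \tfrac{1}{2}\lambda_H(u,v) \rfloor$ because every cut of an Eulerian graph has even size.

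\textbf{Reduction via splitting-off.} For general $H$, let $T$ be the set of odd-degree vertices ($|T|$ is even) and adjoin an auxiliary vertex $v_0$ joined by a single edge to each vertex of $T$, producing an Eulerian multigraph $H^*$. I then apply Mader's splitting-off theorem iteratively at $v_0$: whenever $v_0$ has positive even degree and is not a cut vertex, two of its incident edges $u v_0$ and $v_0 w$ can be replaced by a single edge $uw$ without decreasing $\lambda(x,y)$ for any pair $x,y \in V$. Iterating until $v_0$ becomes isolated yields an Eulerian graph $H'$ on $V$ with $\lambda_{H'}(x,y) \geq \lambda_H(x,y)$ for all $x,y$. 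Invoking the Eulerian case on $H'$ gives an Eulerian orientation $D'$ satisfying $\lambda_{D'}(u,v) \geq \lfloor \tfrac{1}{2}\lambda_H(u,v) \rfloor$. I then un-split: for every split edge $uw$ of $H'$ arising from a pair $(uv_0, v_0w)$, if $D'$ orients $uw$ as $u \to w$ I orient $uv_0$ as $u \to v_0$ and $v_0w$ as $v_0 \to w$. A quick check shows that the resulting orientation $D^*$ of $H^*$ is Eulerian at every vertex (in particular at $v_0$, where each pair contributes one in-arc and one out-arc), and the restriction $D$ of $D^*$ to $E(H)$ is the desired orientation.

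\textbf{Main obstacle.} The crux is verifying that the restricted orientation $D$ really inherits the well-balanced cut inequality $d_D^+(S) \geq \lfloor d_H(S)/2 \rfloor$ for every $S \subseteq V$, which by Menger's theorem is equivalent to the conclusion. Using $d_{D^*}^+(S) = d_{H^*}(S)/2 = (d_H(S) + |S \cap T|)/2$, this condition reduces to the demand that at most $\lceil |S \cap T|/2 \rceil$ of the edges between $S$ and $v_0$ are oriented out of $S$, simultaneously for every $S$. This is \emph{not} automatic from an arbitrary Eulerian orientation and is precisely where one must bring in Lov\'{a}sz's sharpening of Mader's theorem, which provides enough flexibility in the choice of splitting pairs so that the un-split arcs incident to $v_0$ are balanced across every cut. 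Establishing this stronger splitting-off theorem via a submodular-uncrossing argument on the family of tight cuts at $v_0$ is the principal technical hurdle; once in hand, it glues the two stages together.
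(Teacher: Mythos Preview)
The paper does not prove this theorem; it is quoted as a classical result of Nash-Williams (reference~\cite{NW}) and used only as a black box in Corollary~\ref{c:NW} and in the proof of Lemma~\ref{l:1}. There is no proof in the paper to compare your attempt against.

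As to the content of your sketch: the Eulerian case is clean and correct. The reduction you outline---attach a new vertex $v_0$ to all odd-degree vertices, split off completely at $v_0$ while preserving all local edge-connectivities (Mader), orient the resulting Eulerian graph $H'$, then undo the splittings---is indeed one of the standard modern routes to Nash-Williams's theorem (this line is due to Mader and was made fully explicit by Frank). You have also put your finger on exactly the nontrivial step: an arbitrary Eulerian orientation of $H'$, after un-splitting and restricting to $E(H)$, need \emph{not} satisfy $d_D^+(S)\ge\lfloor d_H(S)/2\rfloor$ for every $S$; one must arrange that, for each $S$, at most $\lceil|S\cap T|/2\rceil$ of the $v_0$-edges leave $S$. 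Two remarks on how you describe closing this gap. First, ``Lov\'asz's sharpening of Mader's theorem'' is the wrong way around: Lov\'asz's splitting-off result is the special (Eulerian/global) case, and Mader's is the sharper local-connectivity version you are already invoking. Second, and more importantly, neither splitting-off theorem by itself controls the per-cut balance of the $v_0$-edges; the known completions use an additional argument about \emph{which} pairs get split off (equivalently, the existence of a suitable odd-vertex pairing), and that argument is itself of comparable difficulty to the theorem. So your proposal is a correct high-level plan with the hard step correctly identified but not carried out.
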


We note that given $H$, an orientation as in Theorem~\ref{t:NW} can be computed in polynomial time.
It is easy to see that if $H$ has 
an orientation $D$ that satisfies $r$ then $H$ satisfies the demand function $q$ defined by 
$q(u,v)=r(u,v)+r(v,u)$.
Theorem~\ref{t:NW} implies that edge-connectivity {\sf Survivable Network Orientation} admits 
a polynomial time algorithm that computes a subgraph $H$ of $G$ and an orientation $D$ of $H$
such that $c(H) \leq 2 {\sf opt}$ and  
$$\lambda_D(u,v) \geq \left \lfloor (r(u,v)+r(v,u))/2 \right\rfloor \geq 
\left\lfloor \max\{r(u,v),r(v,u)\}/2 \right\rfloor \ \ \forall (u,v) \in V \times V \ .$$
This is achieved by applying Jain's \cite{Jain} algorithm to compute a 
$2$-approximate solution $H$ for the corresponding undirected 
edge-connectivity {\sf Survivable Network} instance with demands $q(u,v)=r(u,v)+r(v,u)$,
and then computing an orientation $D$ of $H$ as in Theorem~\ref{t:NW}.
This implies that if the costs are symmetric, then by cost at most $2{\sf opt}$
we can satisfy almost half of the demand of every pair, 
and if also the demands are symmetric then we can satisfy all the demands.
The above algorithm also applies for non-symmetric edge-costs, 
invoking an additional cost factor of $\max_{uv \in E} c(v,u)/c(u,v)$.
Summarizing, we have the following observation, which we failed to find in the literature. 

\begin{corollary} \label{c:NW}
Edge-connectivity {\sf Survivable Network Orientation} (with non-sym\-me\-tric costs) admits 
a polynomial time algorithm that computes a subgraph $H$ of $G$ and an orientation $D$ of $H$
such that $c(H) \leq 2 {\sf opt} \cdot \max_{uv \in E} c(v,u)/c(u,v)$ and  
$\lambda_D(u,v) \geq \left \lfloor \frac{1}{2}(r(u,v)+r(v,u))\right\rfloor$ for all $(u,v) \in V \times V$. 
In particular, the problem  admits a $2$-approximation algorithm if both 
the costs and the demands are symmetric.
\end{corollary}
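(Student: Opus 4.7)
The plan is to formalize what the surrounding discussion already outlines: reduce the oriented problem to an undirected survivable network problem with symmetrized demands, invoke Jain's algorithm as a black box, and then orient the resulting subgraph via Theorem~\ref{t:NW}. The only genuinely new element is bookkeeping for non-symmetric edge costs, which is where the factor $\max_{uv \in E} c(v,u)/c(u,v)$ enters.

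First I would set $q(u,v) := r(u,v) + r(v,u)$ and observe that any feasible oriented solution $D^*$ to the original instance, viewed as an undirected graph $H^*$, satisfies $\lambda_{H^*}(u,v) \geq \lambda_{D^*}(u,v) + \lambda_{D^*}(v,u) \geq q(u,v)$ for every ordered pair. Next I would define undirected edge-costs $c'(uv) := \min\{c(u,v), c(v,u)\}$ on $G$; then $c'(H^*) \leq c(D^*) = {\sf opt}$, so ${\sf opt}$ is an upper bound for the undirected survivable network instance $(G,c',q)$. Applying Jain's $2$-approximation \cite{Jain} produces an undirected subgraph $H \subseteq G$ with $\lambda_H(u,v) \geq q(u,v)$ for all pairs and $c'(H) \leq 2{\sf opt}$. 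Applying Theorem~\ref{t:NW} to $H$ yields an orientation $D$ with $\lambda_D(u,v) \geq \lfloor \lambda_H(u,v)/2 \rfloor \geq \lfloor (r(u,v)+r(v,u))/2 \rfloor$, proving the connectivity claim.

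The cost analysis is the only place where care is needed. For every $uv \in E(H)$, Nash-Williams may assign the expensive direction, so the actually incurred cost for that edge is at most $c'(uv) \cdot c(v,u)/c(u,v)$ (picking the orientation so that $c'(uv) = c(u,v)$). Summing over all edges gives $c(D) \leq c'(H) \cdot \max_{uv \in E} c(v,u)/c(u,v) \leq 2{\sf opt} \cdot \max_{uv \in E} c(v,u)/c(u,v)$, as claimed. In the symmetric-cost case the ratio factor is $1$, and if in addition the demands are symmetric then $\lfloor (r(u,v)+r(v,u))/2 \rfloor = r(u,v)$, giving a genuine $2$-approximation.

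I do not expect a real obstacle here: both Jain's algorithm and the constructive version of Theorem~\ref{t:NW} run in polynomial time, and the cost-transfer inequality is elementary. The subtlety worth stating explicitly, rather than a difficulty, is that the reduction works precisely because edge-connectivity in the orientation splits additively between the two directions, which is what lets $q(u,v) = r(u,v)+r(v,u)$ serve as a lower bound for any feasible orientation's undirected support.
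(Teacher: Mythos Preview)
Your proposal is correct and follows the same approach the paper sketches in the paragraph preceding the corollary: symmetrize the demands to $q(u,v)=r(u,v)+r(v,u)$, run Jain's algorithm on the undirected instance, and orient via Theorem~\ref{t:NW}. Your treatment of the non-symmetric cost case (introducing $c'(uv)=\min\{c(u,v),c(v,u)\}$ and bounding the overshoot by the worst edge-wise ratio) is in fact more explicit than what the paper provides, which simply asserts the extra factor without spelling out the auxiliary cost function.
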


\section{Algorithm for  {\sf Steiner Forest Orientation} (Theorem~\ref{t:min})}

\label{s:min}

In this section we prove Theorem \ref{t:min}. 
For a mixed graph or an edge set $H$ on a node set $V$ and $X,Y \subseteq V$ let
$\delta_H(X,Y)$ denote the set of all (directed and undirected) edges in $H$ from $X$ to $Y$
and let $d_H(X,Y)=|\delta_H(X,Y)|$ denote their number;
for brevity, $\delta_H(X)=\delta_H(X,\bar{X})$ and $d_H(X)=d_H(X,\bar{X})$,
where $\bar{X}=V \setminus X$.

Given an integral set-function $f$ on subsets of $V$ we say that 
$H$ covers $f$ if $d_H(X) \geq f(X)$ for all $X \subseteq V$.
Define a set-function $f_r$ by $f_r(\emptyset)=f_r(V)=0$ and for every $\emptyset \neq X \subset V$
\begin{equation} \label{e:fr}
f_r(X)=\max \{r(u,v):u \in X, v \in \bar{X}\} + \max \{r(v,u):u \in X, v \in \bar{X}\} \ .
\end{equation}
Note that the set-function $f_r$ is symmetric, namely, that $f_r(X)=f_r(\bar{X})$ for all $X \subseteq V$.

\begin{lemma} \label{l:feasible}
If an undirected edge set $H$ has an orientation $D$ that satisfies an edge-connectivity demand function $r$ then $H$ covers $f_r$. 
\end{lemma}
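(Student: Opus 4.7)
The plan is to use the directed version of Menger's theorem (i.e., the max-flow/min-cut characterization of edge-connectivity) applied separately in the two opposite directions across the cut $(X,\bar X)$. Since $D$ is an orientation of $H$, every undirected edge of $H$ crossing the cut $(X,\bar X)$ becomes exactly one directed edge of $D$, either in $\delta_D(X,\bar X)$ or in $\delta_D(\bar X,X)$. Hence
\[
d_H(X) \;=\; d_D(X,\bar X) + d_D(\bar X, X),
\]
and it suffices to lower bound each of the two summands separately by the corresponding maximum in the definition~(\ref{e:fr}) of $f_r(X)$.

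Fix an arbitrary $\emptyset \neq X \subset V$. First I would choose $u_1 \in X$ and $v_1 \in \bar X$ that attain the first maximum in~(\ref{e:fr}), so that $r(u_1,v_1)=\max\{r(u,v):u \in X,\,v \in \bar X\}$. Since $D$ satisfies $r$, we have $\lambda_D(u_1,v_1) \geq r(u_1,v_1)$, so by Menger's theorem in the directed graph $D$, every $u_1 v_1$-edge-cut contains at least $r(u_1,v_1)$ edges. In particular, $\delta_D(X,\bar X)$ is such a cut (as $u_1 \in X$ and $v_1 \in \bar X$), so $d_D(X,\bar X) \geq r(u_1,v_1)$. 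Symmetrically, picking $u_2 \in X$ and $v_2 \in \bar X$ that attain the second maximum gives $d_D(\bar X,X) \geq r(v_2,u_2)$, because $\delta_D(\bar X,X)$ is a $v_2 u_2$-edge-cut in $D$.

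Adding the two inequalities yields $d_H(X) \geq r(u_1,v_1) + r(v_2,u_2) = f_r(X)$, which is the desired claim; the cases $X=\emptyset$ and $X=V$ are trivial since $f_r$ vanishes there. There is no real obstacle here — the only thing to be careful about is that the two chosen pairs witnessing the two maxima need not be related (in particular, they need not be inverses of one another), so one must invoke Menger once in each direction rather than trying to use a single pair. Once that is noted, the proof is a one-line cut argument per direction.
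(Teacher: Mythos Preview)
Your proof is correct and follows essentially the same approach as the paper: apply Menger's theorem in each direction across the cut $(X,\bar X)$ to lower bound $d_D(X,\bar X)$ and $d_D(\bar X,X)$ by the two maxima in the definition of $f_r(X)$, then use that $d_H(X)=d_D(X,\bar X)+d_D(\bar X,X)$. The paper's proof is just a terser statement of exactly this argument.
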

\begin{proof}
Let $X \subseteq V$. By Menger's Theorem, any orientation $D$ of $H$ that satisfies $r$
has at least $\max \{r(u,v):u \in X, v \in \bar{X}\}$ edges from $X$ to $\bar{X}$,
and at least $\max \{r(v,u):u \in X, v \in \bar{X}\}$ edges from $\bar{X}$ to $X$.
The statement follows.
\qed
\end{proof}

Recall that in the {\sf Steiner Forest Orientation} problem 
we have $r(u,v)=1$ if $(u,v) \in P$ and $r(u,v)=0$ otherwise. 
We will show that if $r_{\max}=\max\limits_{u,v \in V}r(u,v)=1$
then the inverse to Lemma~\ref{l:feasible} is also true, namely,
if $H$ covers $f_r$ then $H$ has an orientation that satisfies $r$;
for this case, we also give a $4$-approximation algorithm for the problem of computing 
a minimum-cost subgraph that covers $f_r$.
We do not know if these results can be extended for $r_{\max} \geq 2$.

\begin{lemma} \label{l:1}
For $r_{\max}=1$, if an undirected edge set $H$ covers $f_r$ then $H$ has an orientation that satisfies $r$.
\end{lemma}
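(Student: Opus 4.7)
My plan is to orient $H$ by decomposing it into its maximal bridgeless (i.e., $2$-edge-connected) subgraphs. Within each such subgraph I will use Robbins' theorem to obtain a strongly connected orientation, and I will orient the bridges separately so as to respect the pairs in $P$ that must traverse them. The covering hypothesis will guarantee that the bridge orientations are consistent.

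More concretely, I would first reduce to the case where $H$ is connected (otherwise process each component separately). For every $(u,v) \in P$, the endpoints must lie in the same connected component of $H$: taking $X$ to be the component of $u$ gives $d_H(X)=0$ while $f_r(X) \ge 1$, a contradiction. Next, let $C_1,\ldots,C_k$ be the maximal bridgeless subgraphs of $H$ (equivalence classes under the relation ``not separated by any bridge''), and let $T$ be the tree obtained by contracting each $C_i$ to a single node; the edges of $T$ are precisely the bridges of $H$. For each $C_i$ that contains more than one node, $H[C_i]$ is $2$-edge-connected, so Robbins' theorem produces a strongly connected orientation of it; do this in parallel for every $i$.

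It remains to orient the bridges. For a bridge $e$, let $X,\bar X$ be the node sets of the two components of $H-e$. Since $d_H(X)=1$, the assumption that $H$ covers $f_r$ forces $f_r(X)\le 1$, and because $r_{\max}=1$ this implies that one of the two maxima in the definition~(\ref{e:fr}) of $f_r(X)$ vanishes. Thus every pair of $P$ that crosses the cut $(X,\bar X)$ does so in the same direction, and I orient $e$ in that direction (if no pair crosses, orient $e$ arbitrarily). This is the step where the structure of the argument is really being used, and I expect it to be the main obstacle conceptually: the whole reason we need $r_{\max}=1$ is precisely to rule out conflicting bridge orientations, and this is exactly where a higher-demand analogue would break.

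Finally, I would verify feasibility of the resulting orientation $D$. Fix $(u,v)\in P$ and let $C_{i_0},C_{i_1},\ldots,C_{i_t}$ be the sequence of bridgeless components visited by the unique path in $T$ from the component containing $u$ to the one containing $v$, with bridges $e_1,\ldots,e_t$ between consecutive components. Each bridge $e_j$ separates $V$ into a part containing $u$ and a part containing $v$, so by our rule $e_j$ is oriented from the $u$-side to the $v$-side. Inside each $C_{i_j}$ the orientation is strongly connected, so we can route from the head of $e_j$ (or from $u$, for $j=0$) to the tail of $e_{j+1}$ (or to $v$, for $j=t$). Concatenating these directed segments yields a $uv$-path in $D$, which completes the proof.
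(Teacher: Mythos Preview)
Your proof is correct and follows essentially the same approach as the paper: orient each $2$-edge-connected component strongly and orient each bridge according to the unique direction in which pairs of $P$ cross it. The only cosmetic differences are that you invoke Robbins' theorem rather than the (stronger) Nash-Williams theorem used in the paper, and you spell out the verification that the paper leaves as ``easy to see.''
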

\begin{proof}
Observe that if $(u,v) \in P$ (namely, if $r(u,v)=1$) 
then $u,v$ belong to the same connected component of $H$.
Hence it sufficient to consider the case when $H$ is connected.
Let $D$ be an orientation of $H$ obtained as follows.
Orient every $2$-edge-connected component of $H$ to be strongly connected
(recall that a directed graph is strongly connected if there is a directed path from 
any of its nodes to any other); this is possible by Theorem~\ref{t:NW}.
Now we orient the bridges of $H$. 
Consider a bridge $e$ of $H$. 
The removal of $e$ partitions $V$ into two connected
components $X,\bar{X}$.
Note that $\delta_P(X,\bar{X})=\emptyset$ or 
$\delta_P(\bar{X},X)=\emptyset$, since $f_r(X) \leq d_H(X)=1$.
If $\delta_P(X,\bar{X}) \neq \emptyset$, we orient $e$ from $X$ to $\bar{X}$;
if $\delta_P(\bar{X},X) \neq \emptyset$, we orient $e$ from $\bar{X}$ to $X$;
and if $\delta_P(X,\bar{X}),\delta_P(\bar{X},X)=\emptyset$, we orient $e$ arbitrarily.
It is easy to see that the obtained orientation $D$ of $H$ satisfies $P$.
\qed
\end{proof}

We say that an edge-set or a graph $H$ {\em covers} a set-family ${\cal F}$ if 
$d_H(X) \geq 1$ for all $X \in {\cal F}$.
A set-family ${\cal F}$ is said to be {\em uncrossable} if for any $X,Y \in {\cal F}$ 
the following holds: $X \cap Y, X \cup Y \in {\cal F}$ or 
$X \setminus Y,Y \setminus X \in {\cal F}$.
The problem of finding a minimum-cost set of undirected edges that covers an 
uncrossable set-family ${\cal F}$ admits a primal-dual $2$-approximation algorithm, 
provided the inclusion-minimal
members of ${\cal F}$ can be computed in polynomial time \cite{GGPS}.
It is known that the undirected {\sf Steiner Forest} problem is a particular case of the 
problem of finding a min-cost cover of an uncrossable family, and thus admits a 
$2$-approximation algorithm.
 
\begin{lemma} \label{l:F-uncross}
Let $H=(V,J \cup P)$ be a mixed graph, where edges in $J$ are undirected and edges in $P$ are directed, such that for every $uv \in P$ both 
$u,v$ belong to the same connected component of the graph $(V,J)$. Then the set-family 
${\cal F}=\{S \subseteq V: d_J(S)=1 \wedge d_P(S),d_P(\bar{S}) \geq 1\}$ is 
uncrossable, and its inclusion minimal members can be computed in polynomial time.
\end{lemma}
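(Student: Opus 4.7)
The plan is to prove uncrossability by the standard sub/posi-modular cut analysis of $d_J$, coupled with the key observation that the hypothesis ``every $P$-edge has both endpoints in a common $J$-component'' forces every $P$-edge to live inside a single connected component of $(V,J)$. For $X,Y\in{\cal F}$ I will set $A=X\setminus Y$, $B=X\cap Y$, $C=Y\setminus X$, $D=V\setminus(X\cup Y)$ and observe that $d_J(X\cap Y)+d_J(X\cup Y)=d_J(X)+d_J(Y)-2j(A,C)$ and $d_J(X\setminus Y)+d_J(Y\setminus X)=d_J(X)+d_J(Y)-2j(B,D)$, where $j(S,T)$ counts $J$-edges between $S$ and $T$. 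Since $d_J(X)=d_J(Y)=1$ and $d_J(X)=j(A,C)+j(A,D)+j(B,C)+j(B,D)$, at most one of $j(A,C), j(B,D)$ can be nonzero. This produces three cases: $j(A,C)=1$; $j(B,D)=1$; or $j(A,C)=j(B,D)=0$. A short enumeration of the remaining $j$-values then pins down in each case exactly which pair among $\{X\cap Y,X\cup Y\}$ and $\{X\setminus Y,Y\setminus X\}$ has both members with $d_J=1$.

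The main obstacle is then checking the $d_P$ part of membership in ${\cal F}$. For this I will use that each $P$-edge is internal to a single $J$-component, together with the exact pattern of crossing $J$-edges forced by the case. For instance, when $j(A,C)=1$ and all other terms in $d_J(X)+d_J(Y)$ vanish, no $J$-component meets $B$ together with anything outside $B$ (and similarly for $D$), so every $P$-edge crossing $X$ or $Y$ must run between $A$ and $C$. The four hypotheses $d_P(X), d_P(\bar{X}), d_P(Y), d_P(\bar{Y})\ge 1$ then collapse to $p(A,C)\ge 1$ and $p(C,A)\ge 1$, which immediately give $d_P(A), d_P(\bar{A}), d_P(C), d_P(\bar{C})\ge 1$, hence $X\setminus Y, Y\setminus X\in{\cal F}$. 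The $j(B,D)=1$ case is symmetric and yields $X\cap Y, X\cup Y\in{\cal F}$. The third case splits into four subcases according to which two of $j(A,B), j(A,D), j(B,C), j(C,D)$ equal $1$; in each, the same $J$-component argument concentrates the crossing $P$-edges on two specific regions, from which the required $d_P$ conditions for the chosen pair of candidate sets follow directly.

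For the algorithmic claim, $d_J(S)=1$ forces the unique $J$-edge in $\delta_J(S)$ to be a bridge of $(V,J)$, so $S$ is the union of one side of a ``bridge cut'' (obtained by removing a single bridge from the forest of $2$-edge-connected components of $(V,J)$) together with some collection of other entire connected components of $(V,J)$. Since every $P$-edge is internal to some $J$-component, adding or removing whole connected components of $(V,J)$ changes neither $d_P(S)$ nor $d_P(\bar{S})$; hence every inclusion-minimal member of ${\cal F}$ is exactly one of the two sides obtained by removing some bridge of $(V,J)$. The algorithm then enumerates the $O(|J|)$ bridges, records the two bridge-sides of each, tests the two conditions $d_P(\cdot)\ge 1$ by a single scan of $P$, and returns the inclusion-minimal sets among those that pass, all in polynomial time.
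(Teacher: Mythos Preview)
Your proof is correct, and the algorithmic part (inclusion-minimal members are exactly the two bridge-sides, tested by scanning $P$) matches the paper almost verbatim. The uncrossability argument, however, follows a genuinely different route from the paper.

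The paper first establishes a structural characterization: $S\in{\cal F}$ if and only if there is a component $C_S\in{\cal C}$ and a bridge $e_S$ of $C_S$ such that $S$ is one side $X'$ of the bridge cut together with some whole components in ${\cal C}\setminus\{C_S\}$, and both $d_P(X',X'')\ge 1$ and $d_P(X'',X')\ge 1$ hold. With this in hand, for $X,Y\in{\cal F}$ the paper observes (using that $e_X,e_Y$ are bridges, hence the block--bridge tree structure) that at least one of the four intersections $X'\cap Y',\,X'\cap Y'',\,X''\cap Y',\,X''\cap Y''$ is empty; after a symmetry reduction it then shows in one stroke that $X\setminus Y$ is again of the form ``$X'$ plus whole components'', so (C1)--(C2) persist and $X\setminus Y\in{\cal F}$ (and symmetrically for $Y\setminus X$).

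Your argument bypasses the structural characterization and instead runs the standard sub/posi-modular identities for $d_J$ numerically: from $d_J(X)=d_J(Y)=1$ you pin down exactly which of the six inter-region counts $j(\cdot,\cdot)$ equals $1$, and in every resulting case you identify a region that is $J$-isolated from the rest, which forces the crossing $P$-edges into the remaining two regions and yields the needed $d_P$ inequalities. This is more elementary in that it never invokes the tree structure of bridges, but it costs a six-way case split that the paper's approach avoids. Conversely, the paper's proof is shorter once the characterization is stated, but that characterization does some real work (and implicitly uses the same ``isolated region'' principle you make explicit). Both approaches are sound; yours is the more computational one, the paper's the more conceptual one.
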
 
\begin{proof}
Let ${\cal C}$ be the set of connected components of the graph $(V,J)$.
Let $C \in {\cal C}$.
Any bridge $e$ of $C$ partitions $C$ into two parts $C'(e),C''(e)$
such that $e$ is the unique edge in $J$ connecting them. 
Note that the condition $d_J(S)=1$ is equivalent to the following condition (C1), 
while if condition (C1) holds then the condition $d_P(S),d_P(\bar{S}) \geq 1$ 
is equivalent to the following condition (C2)
(since no edge in $P$ connects two distinct connected components of $(V,J)$).
\begin{enumerate}[(C1)]
\item
There exists $C_S \in {\cal C}$ and a bridge $e_S$ of $C_S$,
such that $S$ is a union of one of the sets $X'=C'_S(e_S),X''=C''_S(e_S)$
and sets in ${\cal C} \setminus \{C_S\}$.
\item
$d_P(X',X''),d_P(X'',X') \geq 1$.
\end{enumerate}
Hence we have the following characterization of the sets in ${\cal F}$:
{\em $S \in {\cal F}$ if, and only if, conditions (C1), (C2) hold for $S$.} 
This implies that every inclusion-minimal member of ${\cal F}$ is 
$C'(e)$ or $C''(e)$, for some bridge $e$ of $C \in {\cal C}$.
In particular, the inclusion-minimal members of ${\cal F}$ 
can be computed in polynomial time.

% \begin{figure}[h]
% \begin{center}
% \includegraphics{comps.pdf}
% \caption{Illustration to the proof of Lemma~\ref{l:F-uncross};
%             components distinct from $C_X,C_Y$ are shown by gray ellipses.}
% \label{f:comps}
% \end{center}
% \end{figure}

\begin{figure} 
\centering 
\epsfbox{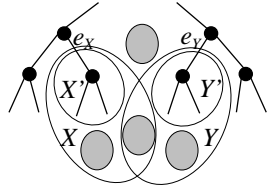}
   \caption{Illustration to the proof of Lemma~\ref{l:F-uncross};
            components distinct from $C_X,C_Y$ are shown by gray ellipses.} 
\label{f:comps}
\end{figure}

Now let $X,Y \in {\cal F}$ (so conditions (C1), (C2) hold for each one of $X,Y$), 
let $C_X,C_Y \in {\cal C}$ be the corresponding connected components and 
$e_X,e_Y$ the corresponding bridges 
(possibly $C_X=C_Y$, in which case we also may have $e_X=e_Y$), and let 
$X',X''$ and $Y',Y''$ be the corresponding partitions of $C_X$ and $C_Y$, respectively.
Since $e_X,e_Y$ are bridges, at least one of the sets 
$X' \cap Y',X' \cap Y'',X'' \cap Y',X'' \cap Y''$
must be empty, say $X' \cap Y'=\emptyset$.
Note that the set-family ${\cal F}$ is symmetric, hence to prove that 
$X \cap Y,X \cup Y \in {\cal F}$ or $X \setminus Y,Y \setminus X \in {\cal F}$,
it is sufficient to prove 
that $A \setminus B, B \setminus A \in {\cal F}$ 
for some pair $A,B$ such that $A \in \{X,\bar{X}\},B \in \{Y,\bar{Y}\}$.
E.g., if $A=X$ and $B=\bar{Y}$, then 
$A \setminus B=X \cap Y$ and $B \setminus A=V \setminus (X \cup Y)$, hence 
$A \setminus B, B \setminus A \in {\cal F}$ together with the 
symmetry of ${\cal F}$ implies $X \cap Y, X \cup Y \in {\cal F}$.
Similarly, if $A=\bar{X}$ and $B=\bar{Y}$, then 
$A \setminus B=Y \setminus X$ and 
$B \setminus A=X \setminus Y$, hence 
$A \setminus B, B \setminus A \in {\cal F}$  
implies $Y \setminus X, X \setminus Y \in {\cal F}$.
Thus w.l.o.g. we may assume that 
$X' \subseteq X$ and $Y' \subseteq Y$, see Figure~\ref{f:comps},
and we show that $X \setminus Y,Y \setminus X \in {\cal F}$.
Recall that $X' \cap Y' = \emptyset$ and
hence $X \cap Y$ is a (possibly empty) union of some sets in 
${\cal C} \setminus \{C_X,C_Y\}$.
Thus $X \setminus Y$ is a union of $X'$ and some sets in 
${\cal C} \setminus \{C_X,C_Y\}$.
This implies that conditions (C1), (C2) hold for $X \setminus Y$, hence 
$X \setminus Y \in {\cal F}$; the proof that
$Y \setminus X \in {\cal F}$ is similar.
This concludes the proof of the lemma.
\qed
\end{proof}

\begin{lemma} \label{l:3}
Given a {\sf Steiner Forest Orientation} instance, the problem of compu\-ting a 
minimum-cost subgraph $H$ of $G$ that covers $f_r$ admits a $4$-approximation algorithm.
\end{lemma}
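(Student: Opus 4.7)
The plan is to exploit the fact that $f_r$ takes values only in $\{0,1,2\}$ and to split the construction into two stages, paying $2{\sf opt}$ at each stage.

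In the first stage I would compute a $2$-approximate solution $J$ to the undirected {\sf Steiner Forest} instance with terminal pairs $\{\{u,v\} : (u,v) \in P\}$ using any standard primal-dual algorithm for covering uncrossable families. Any feasible $H$ for {\sf Steiner Forest Orientation} must in particular connect $u$ to $v$ for every $(u,v) \in P$, so the optimal Steiner Forest cost is at most ${\sf opt}$, and hence $c(J) \le 2 {\sf opt}$. After this stage every cut $X$ with $f_r(X) \ge 1$ already satisfies $d_J(X) \ge 1$, because $f_r(X) \ge 1$ forces some pair $(u,v) \in P$ to straddle $(X,\bar X)$, and $J$ connects $u$ and $v$.

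In the second stage I would set up exactly the uncrossable family
$${\cal F} = \{S \subseteq V : d_J(S) = 1 \ \wedge \ d_P(S), d_P(\bar S) \ge 1\}$$
of Lemma~\ref{l:F-uncross} and run the primal-dual $2$-approximation of Goemans-Goldberg-Plotkin-Shmoys for covering an uncrossable family by undirected edges, which is applicable thanks to the polynomial-time computability of the inclusion-minimal members established in Lemma~\ref{l:F-uncross}. This yields an edge set $F \subseteq E$ with $c(F) \le 2 {\sf opt}_{\cal F}$, where ${\sf opt}_{\cal F}$ denotes the minimum cost of an ${\cal F}$-cover. To bound ${\sf opt}_{\cal F} \le {\sf opt}$, let $H^*$ be an optimal solution; by Lemma~\ref{l:feasible}, $d_{H^*}(S) \ge f_r(S) = 2$ for every $S \in {\cal F}$, while $d_J(S) = 1$, so $d_{H^* \setminus J}(S) \ge d_{H^*}(S) - d_J(S) \ge 1$, hence $H^* \setminus J$ is a feasible ${\cal F}$-cover of cost at most ${\sf opt}$.

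It then remains to verify that the returned subgraph $J \cup F$ covers $f_r$. For any $X$ with $f_r(X) \le 1$ this follows from the observation in the previous paragraph. For $X$ with $f_r(X) = 2$, either $d_J(X) \ge 2$ and we are done, or $d_J(X) = 1$ (the case $d_J(X) = 0$ is ruled out because, as above, $f_r(X) \ge 1$ implies $d_J(X) \ge 1$), in which case $X \in {\cal F}$ and $d_F(X) \ge 1$, so $d_{J \cup F}(X) \ge 2$. Adding up, $c(J \cup F) \le 2{\sf opt} + 2{\sf opt} = 4{\sf opt}$, as desired. The most delicate step is the second stage, since it relies crucially on the uncrossability and tractability of ${\cal F}$ proved in Lemma~\ref{l:F-uncross}; the rest of the argument is essentially bookkeeping on cut sizes.
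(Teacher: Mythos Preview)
Your proposal is correct and follows essentially the same two-phase strategy as the paper: first a $2$-approximate Steiner Forest $J$, then a $2$-approximate cover of the residual uncrossable family ${\cal F}=\{S: d_J(S)=1 \wedge d_P(S),d_P(\bar S)\ge 1\}$ via Lemma~\ref{l:F-uncross} and the primal-dual algorithm of \cite{GGPS}. Your argument is in fact slightly more explicit than the paper's in two places---the bound ${\sf opt}_{\cal F}\le{\sf opt}$ via $d_{H^*\setminus J}(S)\ge d_{H^*}(S)-d_J(S)$, and the final verification that $J\cup F$ covers $f_r$---but the underlying ideas are identical.
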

\begin{proof}
The algorithm has two phases.
In the first phase we solve the corresponding undirected {\sf Steiner Forest} instance
with the same demand function $r$. The {\sf Steiner Forest} 
problem admits a $2$-approximation algorithm, hence $c(J) \leq 2 {\sf opt}$.
Let $J$ be a subgraph of $G$ computed by such a $2$-approximation algorithm.
Note that $f_r(S)-d_J(S) \leq 1$ for all $S \subseteq V$.
Hence to obtain a cover of $f_r$ it is sufficient to cover 
the family ${\cal F}=\{S \subseteq V:f_r(S)-d_J(S)=1\}$ of the deficient sets w.r.t. $J$.
The key point is that the family ${\cal F}$ is uncrossable, and that the inclusion-minimal 
members of ${\cal F}$ can be computed in polynomial time.
In the second phase we compute a $2$-approximate cover of this ${\cal F}$ using the 
algorithm of \cite{GGPS}.
Observe that the set $E(H) \setminus E(J)$, that is 
the set of edges of the optimum solution with edges of $J$ removed,
covers the family ${\cal F}$ and therefore the cost of the second phase 
is at most $2 {\sf opt}$.
Consequently, the problem of covering $f_r$ is reduced to solving two problems
of covering an uncrossable set-family.  

To show that ${\cal F}$ is uncrossable we use Lemma~\ref{l:F-uncross}.
Note that for any $(u,v) \in P$ both $u,v$ belong to the same connected component of $(V,J)$, 
and that $f_r(S)-d_J(S)=1$ if, and only if, $d_J(S)=1$ and $d_P(S),d_P(\bar{S}) \geq 1$,
hence ${\cal F}=\{S \subseteq V: d_J(S)=1 \wedge d_P(S),d_P(\bar{S}) \geq 1\}$.
Consequently, by Lemma~\ref{l:F-uncross}, the family ${\cal F}$ is uncrossable and its 
inclusion-minimal members can be computed in polynomial time.
This concludes the proof of the lemma.
\qed
\end{proof}

The proof of Theorem~\ref{t:min} is complete.

\section{Algorithm for {\sf $P$-Orientation} on mixed graphs (Theorem~\ref{t:mix})} \label{s:mix}

In this section we prove Theorem \ref{t:mix}.
The following (essentially known) statement is straightforward. 

\begin{lemma} \label{lem:contract-cycle}
Let $G$ be a mixed graph, let $P$ be a set of directed edges on $V$, 
and let $C$ be a subgraph of $G$ that admits a strongly connected orientation. 
Let $G',P'$ be obtained from $G,P$ by contracting $C$ into a single node.
Then $G$ is $P$-orientable if, and only if, $G'$ is $P'$-orientable. 
In particular, this is so if $C$ is a cycle. \hfill $\Box$
\end{lemma}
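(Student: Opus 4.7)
The plan is to prove the equivalence by two direct constructions: \emph{projection} for the forward direction, and \emph{lifting} for the backward direction. Let $c$ denote the node of $G'$ obtained by contracting $C$, and for each $x \in V$ define $\pi(x)=c$ if $x \in V(C)$ and $\pi(x)=x$ otherwise; then $P' = \{(\pi(u),\pi(v)) : (u,v) \in P\}$. Edges of $G$ with at least one endpoint outside $V(C)$ correspond canonically to edges of $G'$, while edges of $G$ inside $V(C)$ become loops at $c$ and are dropped when contracting. Under this correspondence, any orientation of the non-contracted edges on one side determines a matching orientation on the other.

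For the forward direction, assume $D$ is an orientation of $G$ with $P[D]=P$. I would project $D$ to an orientation $D'$ of $G'$ by keeping the direction of every non-loop edge and discarding the loops. For any $(u,v) \in P$, a directed $uv$-path in $D$ projects to a directed $\pi(u)\pi(v)$-walk in $D'$, from which a directed path can be extracted; hence $D'$ satisfies $P'$.

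For the backward direction, assume $D'$ is an orientation of $G'$ satisfying $P'$. I would build an orientation $D$ of $G$ by (a) copying the direction of each non-loop edge from $D'$, and (b) orienting the edges inside $V(C)$ so that $C$ becomes strongly connected, which is possible by hypothesis (in particular when $C$ is a cycle, by orienting it consistently around itself). Given $(u,v) \in P$, I need a $uv$-path in $D$. If $u,v \in V(C)$ the path is provided immediately by the strongly connected orientation of $C$. Otherwise, take a directed $\pi(u)\pi(v)$-path $w_0 w_1 \cdots w_k$ in $D'$ and lift it edge by edge to $D$: whenever the path enters $c$ via the image of an edge with concrete endpoint $x \in V(C)$ and leaves via the image of an edge with concrete endpoint $y \in V(C)$, insert a directed $xy$-path inside $C$, available by strong connectivity; apply the same detour at the initial or final step when $u$ or $v$ itself lies in $V(C)$. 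Concatenating these pieces yields a $uv$-walk in $D$, hence a $uv$-path.

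The main (minor) obstacle is book-keeping in the backward direction: one has to verify that the detours through $C$ can be inserted not only between consecutive edges incident to $c$ but also at the endpoints when $u$ or $v$ lies in $V(C)$. Strong connectivity of the orientation of $C$ makes every such detour available, so the argument goes through uniformly. The final sentence of the lemma is immediate because every undirected cycle admits a strongly connected orientation.
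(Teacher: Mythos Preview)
Your argument is correct and is exactly the natural projection/lifting proof one would expect; the paper itself gives no proof, labelling the statement ``essentially known'' and ``straightforward'' and closing it with a bare $\Box$. One small point of bookkeeping you might tighten: in step~(b) you speak of ``orienting the edges inside $V(C)$ so that $C$ becomes strongly connected,'' but $C$ is only a subgraph, so there may be edges of $G$ with both endpoints in $V(C)$ that are not edges of $C$; these can simply be oriented arbitrarily without affecting the argument.
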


\begin{corollary} \label{cor:tree}
{\sf $P$-orientation} (with an undirected graph $G$) can be decided in polynomial time.
\end{corollary}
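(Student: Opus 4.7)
The plan is to use Lemma~\ref{lem:contract-cycle} to reduce $G$ to a forest and then solve the problem by a straightforward consistency check on tree paths.

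First, I would iteratively contract cycles of $G$ (equivalently, contract each maximal $2$-edge-connected subgraph), updating $P$ to the corresponding multi-set $P'$ of pairs on the contracted vertex set. Every $2$-edge-connected undirected graph admits a strongly connected orientation by Robbins' theorem (which also follows from Theorem~\ref{t:NW} with $r \equiv 2$), so each contraction step preserves $P$-orientability by Lemma~\ref{lem:contract-cycle}. The computation of $2$-edge-connected components runs in linear time. After the contractions the resulting graph $F$ has no cycles, hence it is a forest.

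Next, on a forest the decision problem becomes easy. For each pair $(u,v)\in P'$, if $u$ and $v$ lie in different trees of $F$ then clearly no orientation of $F$ (and therefore of $G$) can satisfy $(u,v)$, so we reject. Otherwise, there is a unique $uv$-path in $F$, and in any $P'$-satisfying orientation every edge of this path must be oriented along the path from $u$ to $v$. Thus each edge $e$ of $F$ receives, from the pairs whose paths traverse $e$, a collection of required orientations; if some edge receives conflicting demands from two pairs, we reject, and otherwise we orient each constrained edge in the unique way dictated and each remaining edge arbitrarily. This orientation clearly satisfies $P'$, so by Lemma~\ref{lem:contract-cycle} lifting it back through the contractions (orienting each contracted $2$-edge-connected component strongly) yields an orientation of $G$ satisfying $P$.

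The only step that needs any care is verifying that the reduction is lossless in both directions, and this is exactly what Lemma~\ref{lem:contract-cycle} (applied repeatedly) gives us; there is no real combinatorial obstacle. The whole procedure — finding $2$-edge-connected components, building the forest $F$, computing $uv$-paths, and checking orientation consistency on each edge — runs in polynomial time (in fact $O((n+m)\cdot |P|)$ suffices), which proves the corollary. \qed
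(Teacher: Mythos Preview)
Your proposal is correct and follows essentially the same approach as the paper: contract cycles (via Lemma~\ref{lem:contract-cycle}) to reduce to a forest, then check for each pair that the unique tree path imposes consistent orientations on all edges. The paper's proof is terser but identical in substance.
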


\begin{proof}
By repeatedly contracting a cycle of $G$, we obtain an equivalent instance,
by Lemma~\ref{lem:contract-cycle}.
Hence we may assume that $G$ is a tree.
Then for every $(u,v)\in P$ there is a unique $uv$-path in $G$, 
which imposes an orientation on all the edges of this path.
Hence if suffices to check that no two pairs in $P$ 
impose different orientations of the same edge of the tree.
\qed
\end{proof}

Our algorithm for mixed graphs is based on a similar idea.
We say that a mixed graph is an {\em ori-cycle} if it admits an orientation that is a 
directed simple cycle. We need the following statement.

\begin{lemma} \label{lem:find-cycle}
Let $G=(V,E \cup A)$ be a mixed graph, where edges of $E$ are undirected and $A$ contains directed arcs, and let $G'$ be obtained from $G$ by contracting every
connected component of the undirected graph $(V,E)$ into a single node.
If there is a directed cycle (possibly  a self-loop) $C'$ in $G'$ then there is an ori-cycle $C$ in $G$,
and such $C$ can be found in polynomial time.
\end{lemma}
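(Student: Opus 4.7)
The plan is to construct the ori-cycle $C$ in $G$ by ``lifting'' the directed cycle $C'$ from $G'$, using connectedness of the components of $(V,E)$ to build the undirected portions of $C$.

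First, I would extract the structure of $C'$. Say $C'$ uses arcs $a_1,\ldots,a_k$ of $A$ (which are preserved under contraction), traversed in order, where $a_i$ is directed from a vertex $v_i$ to a vertex $u_{i+1}$ of $G$; here $v_i$ lies in a component $U_i$ of $(V,E)$ and $u_{i+1}$ lies in component $U_{i+1}$ (indices mod~$k$), with the convention that $C'$ passes through the contracted nodes $\{U_1,\ldots,U_k\}$ cyclically. The self-loop case is $k=1$, where both endpoints of $a_1$ lie in the same component $U_1$.

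Next, for each $i$, since $U_i$ is connected in $(V,E)$, I would pick a simple undirected path $Q_i$ in $U_i$ from $u_i$ to $v_i$; if $u_i=v_i$ the path $Q_i$ is trivial. Such paths can be found in polynomial time by BFS/DFS inside each $U_i$. Now define $C$ to be the subgraph of $G$ formed by the arcs $a_1,\ldots,a_k$ together with the edges of $Q_1,\ldots,Q_k$. Orienting each $Q_i$ from $u_i$ to $v_i$ and keeping the directions of the $a_i$'s produces a closed directed walk $u_1\to v_1\to u_2\to v_2\to\cdots\to u_k\to v_k\to u_1$. Since the $Q_i$'s live in pairwise disjoint vertex sets $U_i$ (by the choice of components) and each $Q_i$ is internally simple, this closed walk is in fact a directed simple cycle; hence $C$ is an ori-cycle. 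In the self-loop subcase, either $a_1$ is already a loop (and $C=\{a_1\}$ is a trivial directed cycle) or $u_1\neq v_1$ and $C$ consists of $a_1$ together with the oriented $Q_1$.

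There is no real obstacle here beyond bookkeeping: the only point requiring mild care is ensuring that $C$ is a \emph{simple} directed cycle rather than merely a closed walk, which is automatic because the components $U_i$ associated to the distinct nodes visited by the simple cycle $C'$ are pairwise disjoint, and each internal path $Q_i$ is chosen simple. All steps—finding the cycle $C'$ in $G'$, identifying the attachment vertices $u_i,v_i$, and constructing the simple paths $Q_i$—are standard graph-search operations and run in polynomial time.
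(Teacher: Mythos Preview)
Your proof is correct and follows essentially the same approach as the paper: both lift $C'$ back to $G$ by replacing each contracted node with a simple path inside the corresponding connected component of $(V,E)$ joining the head of the incoming arc to the tail of the outgoing arc. Your write-up is in fact slightly more careful than the paper's, since you explicitly justify simplicity of the resulting cycle via disjointness of the components.
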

\begin{proof}
If $C'$ is also a directed cycle in $G$, then we take $C=C'$.
Otherwise, we replace every node $v_X$ of $C'$ that corresponds to a contracted connected component $X$ of $(V,E)$
by a path, as follows.
Let $a_1$ be the arc entering $v_X$ in $C'$ and let $a_2$ be the arc leaving $v_X$ in $C'$.
Let $v_1$ be the head of $a_1$ and similarly let $v_2$ be a the tail of $a_2$.
Since $X$ is a connected component in $(V,E)$, there is a $v_1v_2$-path in $(V,E)$, and we replace $X$ by 
this path. The result is the required ori-cycle $C$ (possibly a self-loop) in $G$.
It is easy to see that such $C$ can be obtained from $C'$ in polynomial time. 
\qed
\end{proof}

By Lemmas~\ref{lem:find-cycle} and \ref{lem:contract-cycle}
we may assume that the directed graph $G'$ obtained from $G$ by contracting every
connected component of $(V,E)$, is a directed acyclic multigraph (with no self-loops).
This preprocessing step is similar to the one used by Silverbush et al.~\cite{silverbush}.
Let $p=|P|$. Let $f:V \rightarrow V(G')$ be the function
which for each node $v$ of $G$ assigns a node $f(v)$ in $G'$
that represents the connected component of $(V,E)$ that contains $v$
(in other words the function $f$ shows a correspondence between
nodes before and after contractions).

The first step of our algorithm is to guess the first and the last edge
on the path for each of the $p$ pairs in $P$, by trying all $n^{O(p)}$ possibilities.
If for the $i$-th pair an undirected edge is selected as the first or the last one
on the corresponding path, then we orient it accordingly and move it from $E$ to $A$.
Thus by functions ${\rm last},{\rm first}\ :\ \{1,\ldots,p\} \rightarrow A$ we denote
the guessed first and last arc for each of the $p$ paths.

Now we present a branching algorithm with exponential time complexity
which we later convert to $n^{O(p)}$ time by applying a method of memoization.
Let $\pi$ be a topological ordering of $G'$.
By $\cur:\{1,\ldots,p\} \rightarrow A$ we denote the most recently chosen
arc from $A$ for each of the $p$ paths (initially $\cur(i)={\rm first}(i)$).
In what follows we consider subsequent nodes $v_C$ of $G'$ with respect to $\pi$
and branch on possible orientations of the connected component $C$ of $G$.
We use this orientation to update the function $\cur$ for all the arguments $i$
such that $\cur(i)$ is an arc entering a node mapped to $v_C$.

Let $v_C \in V(G')$ be the first node w.r.t. to $\pi$
which was not yet considered by the branching algorithm.
Let $I \subseteq \{1,\ldots,p\}$ be the set of indices $i$ such that $\cur(i)=(u,v) \in A$ for $f(v)=v_C$,
and $\cur(i) \not={\rm last}(i)$.
If $I=\emptyset$ then we skip $v_C$ and proceed to the next node in $\pi$.
Otherwise for each $i \in I$ we branch on choosing an arc $(u,v) \in A$ such that $f(u)=v_C$,
that is we select an arc that the $i$-path will use just after leaving the connected component of $G$
corresponding to the node $v_C$ (note that there are at most $|A|^{|I|}=n^{O(p)}$ branches).
Before updating the arcs $\cur(i)$ for each $i\in I$ in a branch, we check whether the connected component $C$
of $(V,E)$ consisting of nodes $f^{-1}(v_C)$ is accordingly orientable by using Corollary~\ref{cor:tree} (see Figure~\ref{fig:zoom}).
Finally after considering all the nodes in $\pi$ we check whether for each $i \in \{1,\ldots,p\}$ 
we have $\cur(i) = {\rm last}(i)$.
If this is the case our algorithm returns YES and otherwise it returns NO in this branch.

\begin{figure} 
\centering 
\epsfysize=3.1cm
\epsfbox{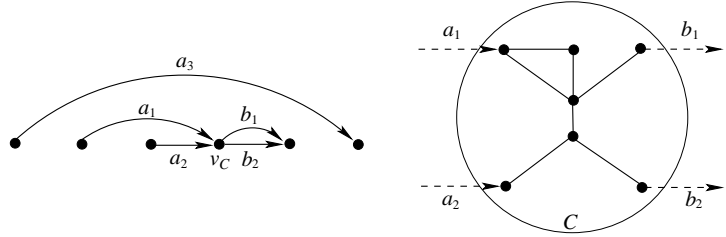}
   \caption{Our algorithm considers what orientation the connected component $C$ (of $(V,E)$)
will have. Currently we have $\cur(1)=a_1$, $\cur(2)=a_2$ and $\cur(3)=a_3$, hence $I=\{1,2\}$.
If in a branch we set new values $\cur(1)=b_1$ and $\cur(2)=b_2$ then by Corollary~\ref{cor:tree} we can
verify that it is possible to orient $C$, so that there is a path from the end-point 
of $a_1$ to the start-point of $b_1$ and from the end-point of $a_2$ to the start-point of $b_2$.
However the branch with new values $\cur(1)=b_2$ and $\cur(2)=b_1$ will be terminated,
since it is not possible to orient $C$ accordingly.}
\label{fig:zoom}
\end{figure}

The correctness of our algorithm follows from the invariant that each node $v_C$ is considered
at most once, since all the updated values $\cur(i)$ are changed to arcs that are
to the right with respect to~$\pi$.

Observe that when considering a node $v_C$ it is not important what orientations previous nodes in $\pi$ have,
because all the relevant information is contained in the $\cur$ function. 
Therefore to improve the currently exponential time complexity we apply the standard technique of 
memoization, that is, store results of all the previously computed recursive calls.
Consequently for any index of the currently considered node in $\pi$ 
and any values of the function $\cur$, there is at most one branch
for which we compute the result, since for the subsequent recursive calls
we use the previously computed results.
This leads to $n^{O(p)}$ branches and $n^{O(p)}$ total time and space complexity.

\section{Algorithms for {\sf Maximum Pairs Orientation} (Theorem~\ref{t:kernel})} \label{s:kernel}

In this section we prove Theorem \ref{t:kernel}.

\begin{lemma}
\label{lem:kernel}
There exists a linear time algorithm that given an instance 
of {\sf Maximum Pairs Orientation} or of {\sf k Pairs Orientation},
transforms it into an equivalent instance such that the input graph is a tree
with at most $3p-1$ nodes.
%{\bf Z: Check if the running time is indeed linear; if not, replace by ``polynomial time''.}
\end{lemma}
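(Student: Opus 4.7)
The plan is to convert the input instance into an equivalent tree of size $O(p)$, where $p=|P|$, by four reduction steps. Step (R1) contracts every 2-edge-connected component of $G$ to a single node; by Robbins' theorem each such component admits a strongly connected orientation, so Lemma~\ref{lem:contract-cycle} applies and produces an equivalent forest $F$ with pair endpoints mapped to their representative nodes. Step (R2) removes every pair whose endpoints lie in different trees of $F$ (any such pair is unsatisfiable under every orientation, so this is equivalence-preserving for both problem variants after recording the number of removed pairs) and deletes every node or edge of $F$ that does not lie on the path between the endpoints of some remaining pair. Step (R3) suppresses every degree-$2$ non-endpoint node by merging its two incident edges. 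Step (R4) contracts every ``forced'' bridge $e=uv$ of $F$, where forced means that all pairs using $e$ traverse it in the same direction, say $u\to v$; a direct argument (not invoking Lemma~\ref{lem:contract-cycle}, since a single bridge is not strongly orientable) shows that in any orientation satisfying a maximum number of pairs $e$ must be oriented $u\to v$, and that contracting $u,v$ into a single node preserves both the set of satisfiable pairs and the optimum.

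After these reductions $F$ has three structural properties: (a) every leaf is an endpoint of some pair; (b) every internal non-endpoint has degree at least~$3$; and crucially (c) every edge is used by pairs in \emph{both} of its possible directions (otherwise it would have been contracted by R4). Property (c) is what bounds the number of leaves. Indeed, if $v$ is a leaf with unique incident edge $e$, then every pair using $e$ must have $v$ as an endpoint; for both directions of $e$ to appear among the pairs, $v$ must be the tail of some pair in $P$ and the head of another. Writing $T$ and $H$ for the sets of pair tails and pair heads, we have $|T|,|H|\le p$, so the number of leaves satisfies $L\le|T\cap H|\le p$.

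Combining $L\le p$ with the usual tree bounds --- $L+E_I\le 2p$ (total endpoints) and $S\le L-2$ (handshake lemma, since every internal non-endpoint has degree $\geq 3$), where $E_I$ and $S$ count the internal endpoint and internal non-endpoint Steiner nodes --- yields
\[
n \;=\; L + E_I + S \;\le\; L + (2p-L) + (L-2) \;=\; 2p + L - 2 \;\le\; 3p - 2 \;\le\; 3p - 1.
\]
Each reduction is implementable in $O(n+m)$ time via DFS-based bridge finding, path tracing and forced-direction checking; if $F$ decomposes into several components we handle them independently and the bound sums across components. The main obstacle is verifying correctness of the forced-bridge contraction (R4), which is not covered by Lemma~\ref{lem:contract-cycle} and must be argued directly for both the decision and maximization versions, and observing that property~(c) is precisely what cuts the number of leaves below $p+1$, turning the naive bound $4p-2$ (obtained without step (R4)) into the stated $3p-1$.
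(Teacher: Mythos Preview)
Your proof is correct and follows the same overall architecture as the paper's: contract $2$-edge-connected components to get a forest, contract ``easy'' bridges, suppress degree-$2$ non-endpoints, then count. The one substantive difference is your rule (R4). The paper contracts an edge $e$ whenever at most one pair uses it (i.e.\ $P(e)\le 1$), which is justified by the one-line observation that a single demand on $e$ can always be met. You instead contract $e$ whenever \emph{all} pairs traverse it in the same direction; this is strictly more aggressive and requires the separate correctness argument you supply (which is fine). The payoff for the leaf bound is the same, though reached differently: from $P(e)\ge 2$ the paper gets ``every leaf is an endpoint of at least two pairs'', hence $2L\le 2p$; from your property~(c) you get ``every leaf is both a head and a tail'', hence $L\le |T\cap H|\le p$. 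Your final count in fact gives $3p-2$, slightly sharper than the paper's $3p-1$. For the linear-time implementation the paper is more explicit, computing all the $P(e)$ values via lowest common ancestors and subtree sums; the same LCA-based technique also yields the per-direction counts your (R4) needs, so your claimed running time is justified. In short: same skeleton, a somewhat heavier contraction rule than necessary, but a valid proof.
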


\begin{proof}
As is observed in \cite{HM}, and also follows from Lemma~\ref{lem:contract-cycle},
we can assume that the input graph $G$ is a tree; such a tree can be constructed in linear time
by contracting the $2$-edge-connected components of $G$.
For each edge $e$ of $G$ we compute an integer $P(e)$,
that is the number of pairs $(s,t)$ in $P$, such that $e$
belongs to the shortest path between $s$ and $t$ in $G$.
If for an edge $e$ we have $P(e) \le 1$, we contract this edge,
since we can always orient it as desired by at most one $st$-path.
Note that after this operation each leaf belongs to at least two pairs in $P$.
%If a leaf of the tree does not belong to any pair in $P$, it can be discarded.
%If a leaf $v$ belongs to exactly one pair in $P$, then
%without loss of generality we may contract the edge incident to $v$.
If a node $v$ has degree $2$ in the tree and does not belong to any pair,
we contract one of the edges incident to $v$; this is since in any inclusion minimal solution,
one of the two edges enters $v$ if, and only if, the other leaves~$v$. 

The linear time implementation of the presented reductions is as follows.
First, using a linear time algorithm for computing 2-edge-connected components,
and by scanning every edge in $E \cup P$, we can see which components it connects,
thus obtaining an equivalent instance where $G$ is a tree.
% The values $P(e)$ can be computed in linear time by standard methods,
% however we postpone the details to the full version.
% After computing the values $P(e)$ contract all the edges with $P(e)=1$.
 To compute all the values $P(e)$, we root the tree in an arbitrary node
and create a multiset of pairs $P'=\{(s,{\rm lca}(s,t)), (t,{\rm lca}(s,t)): (s,t) \in P\}$,
where ${\rm lca}(s,t)$ is the lowest common ancestor of $s$ and $t$,
which can be computed in linear time~\cite{HT}.
Note that in $P'$ the first coordinate of each pair is an descendant of the second coordinate node
and pairs in $P'$ represent upward paths.
Let $a(v) = |\{(v,x) \in P'\}| - |\{(x,v) \in P'\}|$ and observe that 
if $u$ is a parent of $v$ in the tree $G$, then $P(uv)$ equals
the sum of values $a(u')$ over all descendants $u'$ of the node $u$ (including $u'=u$).
Therefore we can compute all the values $P(e)$ in linear time
and contract all the edges with $P(e)=1$.
Note that after the contractions are done we need to relabel pairs in $P$,
since some nodes may have their labels changed,
but this can be also done in linear time by storing a new label for each node
in a table.
Finally using a graph search algorithm we find maximal paths such
that each internal node is of degree two and does not belong to any pair.
For each such path we contract all but one edge.

We claim that after these reductions are implemented, the tree $G'$ obtained has at most $3p-1$ nodes. 
Let $\ell$ be the number of leaves and $t$ the number of nodes of degree $2$ in $G'$. 
As each node of degree less than $3$ in $G'$ is an $s_i$ or $t_i$, $\ell+t\leq 2p$.
Since each leaf belongs to at least two pairs in $P$, $\ell \le p$.
The number of nodes of degree at least $3$ is at most 
$\ell-1$ and so $|V(G')|\leq 2\ell+t-1\leq 3p-1$.
This concludes the proof of the lemma.
\qed
\end{proof}

After applying Lemma~\ref{lem:kernel}, the number of nodes $n$
of the returned tree is at most $3p-1$.
Therefore, by~\cite{ZV}, one can find in polynomial time
a solution $D$, such that $|P[D]| \ge p/(4\log_2n) \ge p/(4\log_2(3p))$.
Therefore, if for a given {\sf $k$ Pairs Orientation} instance
we have $k \le p/(4\log_2(3p))$, then clearly it is a YES instance.
However if $k>p/(4\log_2(3p))$, then $p=\Theta(k\log k)$. 
In order to solve the {\sf $k$ Pairs Orientation} instance
we consider all possible ${p \choose k}$ subsets $P'$ of exactly $k$
pairs from $P$, and check if the graph is $P'$-orientable.
Observe that
$${p \choose k}\le \frac{p^k}{k!} \le \frac{p^k}{(k/e)^k} \le \frac{p^k}{(p/(4e\log_2 (3p)))^k} = (4e\log_2 (3p))^k = 2^{O(k\log\log k)}$$
where the second inequality follows from Stirling's formula.
Therefore the running time is $O(m+n)+2^{O(k\log\log k)}$, which proves (i).

Combining Lemma~\ref{lem:kernel} with the $O(\log n/\log \log n)$-approximation algorithm of 
Gamzu et al. \cite{segev-approx} proves (ii).
Thus the proof of Theorem~\ref{t:kernel} is complete.

\section{Conclusions and open problems}

In this paper we considered minimum-cost and maximum pairs orientation problems.
Our main results are a $4$-approximation algorithm for {\sf Steiner Forest Orientation},
an $n^{O(|P|)}$ time algorithm for {\sf $P$-Orientation} on mixed graphs,
and an $O(n+m)+2^{O(k\cdot \log \log k)}$ time algorithm for {\sf $k$ Pairs Orientation} 
(which implies that {\sf $k$ Pairs Orientation} is fixed-parameter tractable when parameterized by $k$,
solving an open question from \cite{fpt}).
We now mention some open problems, most of them related to the work of Khanna, Naor, and Shepherd \cite{KNS}.

We have shown that the {\sf $P$-Orientation} problem on mixed graphs parameterized by $|P|$ belongs to XP,
however to the best of our knowledge it is not known whether this problem is fixed parameter tractable
or $W[1]$-hard.

As was mentioned, \cite{KNS} 
showed that the problem of computing a minimum-cost 
$k$-edge-outconnected orientation can be solved in polynomial time, even for non-symmetric edge-costs.
To the best of our knowledge, for {\em node-connectivity}, and even for the simpler notion 
of {\em element-connectivity}, no non-trivial approximation ratio is known even for 
symmetric costs and $k=2$.
Moreover, even the decision version of determining whether 
an undirected graph admits a $2$-outconnected orientation is not known to be in P nor NP-complete.  

For the case when the orientation $D$ is required to be 
$k$-edge-connected, $k \geq 2$, \cite{KNS} obtained a pseudo-approximation algorithm 
that computes a $(k-1)$-edge-connected subgraph of cost at most $2k$ times the cost 
of an optimal $k$-connected subgraph.
It is an open question if the problem admits a non-trivial true approximation 
algorithm even for $k=2$.

\section{Algorithm for {\sf $\ell$ Disjoint Paths Orientation} (Theorem~\ref{t:nc})} \label{s:nc}

In this section we prove Theorem~\ref{t:nc}.
We need the following characterization due to \cite{EKM} 
of feasible solutions to {\sf $\ell$ Disjoint Paths Orientation}.

\begin{theorem} [\cite{EKM}] \label{t:EKM}
Let $H=(V,E_H)$ be an undirected graph and let $s,t \in V$. 
Then $H$ has an orientation $D$ such that 
$\kappa_D(s,t),\kappa_D(t,s) \geq \ell$ if, and only if,
\begin{equation} \label{e:EKM}
\lambda_{H \setminus C}(s,t) \geq 2(\ell-|C|) \ \mbox{ for every }  C \subseteq V \setminus \{s,t\} 
\mbox{ with } |C|<\ell \ .
\end{equation}
Furthermore, if $H$ satisfies (\ref{e:EKM}), then an orientation $D$ of $H$
that satisfies \\ 
$\kappa_D(s,t),\kappa_D(t,s) \geq \ell$ can be computed in polynomial time. 
\end{theorem}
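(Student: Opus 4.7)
The plan is to prove the two directions of the characterization separately, and then address the algorithmic claim; the reverse direction carries the real content.

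\textbf{Necessity} (the ``only if'' direction). Suppose $D$ is an orientation of $H$ with $\kappa_D(s,t),\kappa_D(t,s)\ge \ell$, and fix any $C\subseteq V\setminus\{s,t\}$ with $|C|<\ell$. Each of the guaranteed $\ell$ internally disjoint $st$-paths, respectively $ts$-paths, can be hit by at most $|C|$ nodes of $C$, so $D-C$ retains at least $\ell-|C|$ of each kind; Menger gives $\lambda_{D-C}(s,t),\lambda_{D-C}(t,s)\ge \ell-|C|$. Now for any partition $(X,\bar X)$ of $V(H)\setminus C$ with $s\in X$, $t\in\bar X$, Menger applied in the directed graph $D-C$ gives $d_{D-C}(X,\bar X)\ge \ell-|C|$ from $\lambda(s,t)$ and $d_{D-C}(\bar X,X)\ge \ell-|C|$ from $\lambda(t,s)$. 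Summing these two disjoint sets of arcs, the underlying undirected cut satisfies $d_{H-C}(X)\ge 2(\ell-|C|)$, proving (\ref{e:EKM}).

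\textbf{Sufficiency} (the ``if'' direction). I would proceed by induction on $\ell$. For $\ell=1$, condition (\ref{e:EKM}) reduces to $\lambda_H(s,t)\ge 2$, so $s$ and $t$ lie in a common $2$-edge-connected block; Robbins' theorem (the $k=1$ case of Theorem~\ref{t:NW}) orients that block to be strongly connected, which supplies the required $st$- and $ts$-paths, and remaining edges are oriented arbitrarily. For the inductive step, the target is the following \emph{splitting lemma}: under (\ref{e:EKM}) for $\ell$, one can pick two edge-disjoint $st$-paths $P_1,P_2$ in $H$ whose removal leaves a graph $H'$ still satisfying (\ref{e:EKM}) with parameter $\ell-1$. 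Granted this, orient $P_1$ from $s$ to $t$ and $P_2$ from $t$ to $s$, apply induction to get an orientation $D'$ of $H'$ with $\kappa_{D'}(s,t),\kappa_{D'}(t,s)\ge \ell-1$, and combine: appending $P_1$ (resp.~$P_2$) to the $\ell-1$ internally disjoint $st$-paths (resp.~$ts$-paths) in $D'$ produces the required $\ell$ node-disjoint paths in each direction, since $P_1,P_2$ can be chosen internally disjoint from the recursively produced paths (at worst by perturbing their choice along shared internal vertices using edge-disjointness).

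The \textbf{main obstacle} is proving the splitting lemma: on removing the $2$ edges of $P_1\cup P_2$ from a cut $\delta_{H-C}(X)$, its size drops by at most $2$, but one must show that for \emph{every} node set $C$ with $|C|<\ell-1$ and every tight cut $(X,\bar X)$ in $H-C$, some choice of $P_1,P_2$ loses exactly $2$ (one in each direction, so to speak) rather than more. The standard approach is to consider the family of \emph{tight} pairs $(C,X)$, which achieve equality in (\ref{e:EKM}), and use submodular/uncrossing arguments on both the node-cut side (via Mader-type inequalities for $\kappa$) and the edge-cut side to show their structure is restrictive enough that a compatible pair $P_1,P_2$ exists; equivalently, one sets up an auxiliary flow/matroid problem whose feasibility is certified precisely by (\ref{e:EKM}). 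This is the technically heaviest part and is where the real work of \cite{EKM} lies.

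\textbf{Algorithmic claim}. For fixed $(s,t)$, condition (\ref{e:EKM}) can be verified in polynomial time: combining $C$ with the cut $(X,\bar X)$ yields a node-weighted $st$-cut in $H$ of value $|C|+\tfrac12 d_{H-C}(X)$, so checking (\ref{e:EKM}) amounts to testing whether a certain submodular cut function has value $\ge \ell$ everywhere, which is a single node-weighted min-cut computation after an auxiliary construction. The constructive proof above---executed inductively, with each splitting step realized by a max-flow/augmenting-path computation that locates $P_1,P_2$---then produces the orientation in polynomial time in $\ell$ rounds.
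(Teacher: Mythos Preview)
The paper does not prove Theorem~\ref{t:EKM}; it is quoted from \cite{EKM} and used as a black box, so there is no ``paper's own proof'' to compare against. Your necessity argument is fine and standard. The sufficiency argument, however, has a genuine gap beyond the admittedly unproved splitting lemma.

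Even granting the splitting lemma, your inductive step does not close. You orient $P_1$ from $s$ to $t$, orient $H'=H\setminus E(P_1\cup P_2)$ inductively to get $\ell-1$ internally disjoint $st$-paths $Q_1,\dots,Q_{\ell-1}$ in $D'$, and then assert that $P_1,Q_1,\dots,Q_{\ell-1}$ are $\ell$ internally disjoint $st$-paths. But the $Q_i$ live in $H'$, which is only \emph{edge}-disjoint from $P_1$; nothing prevents $P_1$ from passing through internal vertices of several $Q_i$. Your parenthetical ``perturbing their choice along shared internal vertices using edge-disjointness'' does not work: swapping at a shared vertex $v$ can destroy node-disjointness among the $Q_i$, or reverse the direction of a subpath, and in general there is no local fix. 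Concretely, by Menger you need that for every $C\subseteq V\setminus\{s,t\}$ with $|C|=\ell-1$ the graph $D-C$ still contains an $st$-path; but $D'-C$ may already be $st$-disconnected (you only have $\kappa_{D'}(s,t)\ge \ell-1$), and $P_1$ may be killed by $C$ as well. So the induction as written does not deliver $\kappa_D(s,t)\ge \ell$. The actual proof in \cite{EKM} requires substantially more care in how the two paths interact with the residual structure, and your sketch does not capture that mechanism.
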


Now, let us use the following version of Menger's Theorem for node and edge capacitated graphs;
this version can be deduced from the original Menger's Theorem by elementary constructions.

\begin{lemma} \label{l:M}
Let $s,t$ be two nodes in a directed/undirected graph $H=(V,E)$ with edge and node capacities 
$\{u(a):a \in E \cup (V \setminus \{s,t\}\}$. 
Then the maximum number of $st$-paths such that every $a \in E \cup (V \setminus \{s,t\})$ 
appears in at most $u(a)$ of them equals to
$\min\{u(A): A \subseteq E \cup (V \setminus \{s,t\}), \lambda_{H \setminus A}(s,t)=0\}$. \hfill $\Box$
\end{lemma}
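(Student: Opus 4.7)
The plan is to reduce the statement to the classical (arc) Menger/max-flow--min-cut theorem on a directed multigraph by a sequence of standard, capacity-preserving transformations, and then verify that $st$-paths and $st$-cuts correspond bijectively (counting multiplicities) under the reductions.

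First, if $H$ is undirected I would replace every undirected edge $\{x,y\}$ by the two arcs $(x,y)$ and $(y,x)$, each carrying the same capacity as the original edge. Every $st$-walk in the directed version whose arc-multiplicities respect these capacities projects to an $st$-walk in $H$ respecting the edge-capacities, and conversely; the cuts correspond in the same way. This lets me assume $H$ is directed. Next, I would eliminate node capacities by the standard node-splitting trick: replace each internal node $v \in V\setminus\{s,t\}$ by two nodes $v^-,v^+$ and a single arc $(v^-,v^+)$ of capacity $u(v)$; every arc entering $v$ is redirected to $v^-$ and every arc leaving $v$ is redirected from $v^+$. This yields a directed graph $H'$ with only arc-capacities, and a clear bijective correspondence between internally node/arc-capacitated $st$-paths in $H$ and arc-capacitated $st$-paths in $H'$. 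Equally importantly, an $st$-cut $A \subseteq E(H) \cup (V(H)\setminus\{s,t\})$ in $H$ corresponds to an arc-cut $A'$ in $H'$ of the same total capacity: each selected node $v$ becomes its split-arc $(v^-,v^+)$, and each selected edge is kept as the corresponding arc.

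At this point only arc capacities remain, and since capacities are positive integers I would replace each arc of capacity $c$ by $c$ parallel arcs of unit capacity, obtaining a directed multigraph $H''$. A collection of $st$-paths obeying the unit capacities in $H''$ is exactly a collection of arc-disjoint $st$-paths, and a minimum-capacity $st$-cut in $H'$ equals the minimum number of arcs whose deletion disconnects $s$ from $t$ in $H''$. The classical directed edge-form of Menger's theorem applied to $H''$ therefore yields
\[
\max\{\text{arc-disjoint $st$-paths in $H''$}\} \;=\; \min\{|F|: F\subseteq E(H''),\ \lambda_{H''\setminus F}(s,t)=0\}.
\]
Unrolling the two reductions, the left side equals the maximum number of $st$-paths in $H$ in which every $a\in E\cup(V\setminus\{s,t\})$ is used at most $u(a)$ times, and the right side equals $\min\{u(A):A\subseteq E\cup(V\setminus\{s,t\}),\ \lambda_{H\setminus A}(s,t)=0\}$, giving the stated equality.

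The routine but slightly delicate step is verifying that the cut-correspondence is exact in both directions, in particular that one may always assume an optimal cut in $H'$ either contains an entire bundle of parallel arcs originating from a node-split or avoids it (so as to pull back to a valid set in $V\setminus\{s,t\}$). This is handled by the standard observation that for each split-pair $(v^-,v^+)$, a minimum cut in $H''$ can be chosen either to contain all $u(v)$ parallel copies of the split-arc or none. Once this normalization is done, the pull-back of cuts to $H$ is well-defined and capacity-preserving, and the lemma follows.
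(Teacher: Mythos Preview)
The paper does not actually prove this lemma: it is stated with a $\Box$ and only the remark that it ``can be deduced from the original Menger's Theorem by elementary constructions.'' Your proposal spells out precisely those elementary constructions (bidirection for the undirected case, node-splitting to eliminate node capacities, and parallel arcs to reduce integer capacities to unit capacities), so you are doing exactly what the paper intends but omits.

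One small imprecision worth tightening: in your undirected-to-directed step you assert that an $st$-path collection in the bidirected graph respecting the arc capacities ``projects to'' one in $H$ respecting the edge capacities. Taken literally this is false, since two different paths may use the arcs $(x,y)$ and $(y,x)$ respectively, so the projected collection uses the edge $\{x,y\}$ twice while each arc had capacity $u(e)$; in the extreme the edge could be used $2u(e)$ times. The standard fix is either to invoke max-flow--min-cut for undirected capacitated graphs directly (so no bidirection is needed at this step), or to argue via flow cancellation that an optimal integral flow in the bidirected graph can be taken to use at most one of the two antiparallel arcs on each edge, after which the projection is capacity-respecting. With that adjustment your argument is complete and matches the paper's intended route.
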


From Lemma~\ref{l:M} we deduce the following.

\begin{corollary} \label{c:EKM}
An undirected graph $H=(V,E)$ satisfies (\ref{e:EKM}) if, and only if, the following condition holds: 
$H$ contains $2\ell$ edge-disjoint $st$-paths such that every $v \in V \setminus \{s,t\}$ 
belongs to at most 2 of them.
\end{corollary}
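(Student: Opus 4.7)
The plan is to apply Lemma~\ref{l:M} to the graph $H$ with edge capacities $u(e)=1$ for every $e\in E$ and node capacities $u(v)=2$ for every $v\in V\setminus\{s,t\}$. Under this capacity assignment, a collection of $st$-paths is feasible precisely when the paths are edge-disjoint and each internal node is used at most twice, so the existence of $2\ell$ edge-disjoint $st$-paths with every $v\in V\setminus\{s,t\}$ appearing in at most $2$ of them is equivalent to the max value in Lemma~\ref{l:M} being at least $2\ell$, which in turn is equivalent to $u(A)\ge 2\ell$ for every $A\subseteq E\cup(V\setminus\{s,t\})$ with $\lambda_{H\setminus A}(s,t)=0$.

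Next I would rewrite such an $A$ by splitting it into its node part $C=A\cap(V\setminus\{s,t\})$ and its edge part $F=A\cap E$, so that $u(A)=2|C|+|F|$ and the condition $\lambda_{H\setminus A}(s,t)=0$ is exactly the statement that $F$ is an $st$-edge-cut in the subgraph $H\setminus C$. Therefore the ``$2\ell$ paths'' condition becomes
\[
2|C|+|F| \ge 2\ell \qquad \text{for every } C\subseteq V\setminus\{s,t\} \text{ and every } st\text{-edge-cut } F \text{ in } H\setminus C.
\]
Minimising $|F|$ over $st$-edge-cuts $F$ of $H\setminus C$ (using Menger's theorem) turns this into $\lambda_{H\setminus C}(s,t)\ge 2(\ell-|C|)$ for every $C\subseteq V\setminus\{s,t\}$.

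Finally I would observe that whenever $|C|\ge\ell$ the inequality $2(\ell-|C|)\le 0$ is vacuous, so it is enough to impose it for $|C|<\ell$, which is exactly condition (\ref{e:EKM}). This gives both directions: (\ref{e:EKM}) implies the capacity bound for every $A$ (splitting into cases $|C|\ge\ell$ and $|C|<\ell$), and conversely, given $C$ with $|C|<\ell$, choosing $F$ to be a minimum $st$-edge-cut in $H\setminus C$ recovers the required lower bound on $\lambda_{H\setminus C}(s,t)$.

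There is no real obstacle here; the only point requiring a small amount of care is the translation between the ``forbidden set'' formulation of Lemma~\ref{l:M} (a mixed set $A$ of edges and nodes whose removal disconnects $s$ from $t$) and the ``cut after deleting nodes'' formulation of (\ref{e:EKM}), together with the trivial handling of the degenerate case $|C|\ge\ell$.
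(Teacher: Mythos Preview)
Your proposal is correct and follows essentially the same route as the paper's proof: assign unit edge capacities and node capacities $2$, invoke Lemma~\ref{l:M}, split a disconnecting set $A$ into its node part $C$ and edge part $F$ so that $u(A)=2|C|+|F|$, and then apply Menger's Theorem in $H\setminus C$ to recover (\ref{e:EKM}), noting that the case $|C|\ge\ell$ is vacuous. The only difference is cosmetic---you spell out the handling of $|C|\ge\ell$ a touch more explicitly than the paper does.
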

\begin{proof}
Assign capacity $u(e)=1$ to every $e \in E$ and capacity $u(v)=2$ to every $v \in V \setminus \{s,t\}$.
By Lemma~\ref{l:M}, the condition in the corollary is equivalent to the condition
$$\min\{u(C \cup F): F \subseteq E, C \subseteq (V \setminus \{s,t\}),\lambda_{H \setminus (C \cup F)}(s,t)=0\} 
\geq 2\ell \ .$$
Since $u(C)=2$ for all $C \subseteq V \setminus \{s,t\}$ and $u(F)=|F|$ for all $F \subseteq E$, 
the latter condition is equivalent to the condition
$$\min \{|F|: F \subseteq E,\lambda_{(H \setminus C) \setminus F}(s,t)=0\} \geq 2\ell -2|C| \ \ 
\forall  C \subseteq V \setminus \{s,t\} \mbox{ with } |C|<\ell \ .$$
The above condition is equivalent to (\ref{e:EKM}), since 
for every $C \subseteq V \setminus \{s,t\}$ we have 
$\min \{|F|: F \subseteq E,\lambda_{(H \setminus C) \setminus F}(s,t)=0\}=\lambda_{H \setminus C}(s,t)$,
by applying Menger's Theorem on the graph $H \setminus C$.
The statement follows.
\qed
\end{proof}

Now consider the following problem.

\vspace{0.1cm}

\begin{center} 
\fbox{
\begin{minipage}{0.96\textwidth}
{\sf Node-Capacitated Min-Cost $k$-Flow} \\ 
{\em Instance:} \ A graph $G=(V,E)$ with edge-costs,
$s,t \in V$, node-capacities  
\hphantom{\em Instance:} \ 
$\{b_v: v \in V \setminus \{s,t\}\}$, and an integer $k$. \\
{\em Objective:} Find a set $\Pi$ of $k$ edge-disjoint paths such that every $v \in V \setminus \{s,t\}$ 
\hphantom{\em Objective:} belongs to at most $b_v$ paths in $\Pi$.
\end{minipage}
}
\end{center}

\vspace{0.1cm}

From Corollary~\ref{c:EKM}, we see that {\sf $\ell$ Disjoint Paths Orientation} is a particular case of 
{\sf Node-Capacitated Min-Cost $k$-Flow} when $H$ is undirected, $k=2\ell$, and all node capacities are $2$.
% Thus to prove Theorem~\ref{t:nc}, it is sufficient to prove the following.

{\sf Node-Capacitated Min-Cost $k$-Flow} can be solved in polynomial time, for both directed and undirected graphs,
by reducing the problem to the standard {\sf Edge-Capacitated Min-Cost $k$-Flow} problem.
For directed graphs this can be done by a standard reduction of converting node-capacities
to edge-capacities: replace every node $v \in V \setminus\{s,t\}$ by the two
nodes $v^+,v^-$, connected by the edge $v^+v^-$ having the same capacity as $v$, and redirect the
heads of the edges entering $v$ to $v^+$ and the tails of the edges leaving $v$ to $v^-$.
The undirected case is easily reduced to the directed one, by solving the problem on the 
bidirection graph of $G$, obtained from $G$ by replacing every undirected edge $e$ connecting $u,v$
by a pair of antiparallel directed edges $uv,vu$ of the same cost as $e$.

The proof of Theorem~\ref{t:nc} is complete.

\bibliographystyle{abbrv}
\bibliography{u}

\end{document}